\newcommand{\eqspace}{\quad \quad} 
\newtheorem{observation}{Observation}
\def\lengtharrow{40mm}
\newcommand{\boitecentre}[1]{\begin{minipage}[t][][t]{\lengtharrow}{\hbox to 
\lengtharrow{\hfill#1\hfill}}\end{minipage}}
\newcommand{\flechegauche}[1]{$\stackrel{\text{#1}}{\hbox to 
\lengtharrow{\leftarrowfill}}$}
\newcommand{\flechedroite}[1]{$\stackrel{\text{#1}}{\hbox to 
\lengtharrow{\rightarrowfill}}$}
\newcolumntype{C}{>{\scriptsize}{c}}
\newcommand\blfootnote[1]{%
  \begingroup
  \renewcommand\thefootnote{}\footnote{#1}%
  \addtocounter{footnote}{-1}%
  \endgroup
}
\begin{document}


\title{Distance-bounding facing both mafia and distance frauds: Technical 
report $\star$}

\author{Rolando Trujillo-Rasua\inst{1}\and Benjamin Martin\inst{2}\and Gildas 
Avoine\inst{2,3}}
\institute{ Interdisciplinary Centre for Security, Reliability and Trust \\
University of Luxembourg \\  \and 
Universit\'e catholique de Louvain\\Information Security Group, Belgium \\ \and 
INSA 
Rennes, IRISA UMR 6074, 
France\\}

\maketitle
\thispagestyle{plain}
\begin{abstract}
Contactless technologies such as RFID, NFC, and sensor networks are vulnerable to mafia and distance frauds. Both frauds aim at passing an authentication protocol by cheating on the actual distance between the prover and the verifier. To cope these security issues, distance-bounding protocols have been designed. However, none of the current proposals simultaneously resists to these two frauds without requiring additional memory and computation. The situation is even worse considering that just a few distance-bounding protocols are able to deal with the inherent background noise on the communication channels. This article introduces a noise-resilient distance-bounding protocol that resists to both mafia and distance frauds. The security of the protocol is analyzed with respect to these two frauds in both scenarios, namely noisy and noiseless channels. Analytical expressions for the adversary's success probabilities are provided, and are illustrated by experimental results. The analysis, performed in an already existing framework for fairness reasons, demonstrates the undeniable advantage of the introduced lightweight design over the previous proposals.

\keywords{Authentication, distance-bounding, relay attack, mafia fraud, distance fraud, noise.}
\end{abstract}

\section{Introduction}
\blfootnote{This document contains content accepted for publication at the IEEE 
Transactions on Wireless Communications~\cite{TMA2014}.}


A \emph{mafia fraud} is a man-in-the-middle attack applied against an authentication protocol where the adversary simply relays the exchanges without neither manipulating nor understanding them~\cite{AvoineBKLM-2011-jcs}. The earliest version of this attack was introduced by Conway in 1976 and is known as the \emph{chess grandmaster problem}~\cite{citeulike:1223195}. In this problem, a little girl is able to compete with two chess grandmasters during a postal chess game, where she transparently relays the moves between the two grandmasters. She eventually wins a game or draws both. 


In modern cryptography, mafia frauds can typically be used against authentication protocols. The adversary relays the messages between the prover and the verifier, who think they communicate together, while there is an adversary in the middle. This so-called mafia fraud was actually suggested by Desmedt, Bengio and Goutier in 1987~\cite{Desmedt:1987:SUS:646752.704723} to defeat the Fiat-Shamir protocol~\cite{FiatS-1986-crypto}.

One of the most promising field to apply the mafia fraud is the contactless technology, especially Radio Frequency IDentification (RFID) and Near Field Communication (NFC) where the devices answer to any solicitation without explicit agreement of their holder. Some attacks have already been performed against both RFID and NFC systems~\cite{Francis:2010:PNP:1926325.1926331,Hancke:2008:ATD:1352533.1352566}. Nevertheless, mafia fraud is not limited to contactless technologies, it also threats other technologies such as smartcards~\cite{DrimerM-2007-usenix} and e-voting~\cite{OrenW-2009-eprint}.


Two other attacks related to the mafia fraud exist: the \emph{terrorist fraud} and the \emph{distance fraud}. The distance fraud only involves a malicious prover, who cheats on his distance to the verifier. It was first introduced by Brands and Chaum~\cite{Brands:1994:DP:188307.188361}, and comes from the distance measuring process used to defeat the mafia fraud. The terrorist fraud is a variant of the mafia fraud where the prover is malicious and actively helps the adversary to succeed the attack~\cite{BengioBDGQ-1991-crypto}. No solution exists yet to avoid this exotic fraud, which is not addressed in this paper. Additional countermeasure must actually be considered to thwart this fraud.


As mentioned above, a distance measuring process can mitigate the mafia and distance frauds. To that aim, Brands and Chaum~\cite{Brands:1994:DP:188307.188361} proposed the \emph{distance-bounding protocols} (DB protocols). The distance estimation relies on the measurement of the Round-Trip-Time (RTT) of single bit exchanges. Considering the physical impossibility to travel faster than the speed of light, RTT bounds the distance between the parties. Several distance-bounding protocols have been proposed~\cite{AvoineBKLM-2011-jcs}. However, none of the current DB protocols are lightweight and resistance to both mafia and distance frauds. Furthermore, just a few of them are able
to deal with the inherent background noise of the communication
channel. 


\noindent\textbf{Contribution.} In this paper we introduce a novel DB protocol that significantly reduces the success probability of an adversary capable of mounting both mafia and distance frauds. Our protocol does not rely on computationally expensive primitives, has a very low memory requirement, and is noise-resilient. Therefore, it is efficient and suitable for extremely low resources devices. We provide analytical and experimental results that together show the superiority of our proposal w.r.t. to previous ones. 

\noindent\textbf{Organization.} Further below Section~\ref{section:review} presents a brief background about DB protocols. Section~\ref{sec:rationality} explains the rationality behind our proposal and Section~\ref{section:protocol} introduces and details the proposal. Sections~\ref{sec:mafia} and~\ref{sec:distance} are dedicated to the resistance of the protocol to mafia and distance frauds respectively. Section~\ref{section:noise} describes our noise resiliency mechanism. Section~\ref{section:result} provides comparative results with several DB protocols in both scenarios the free-noise case and the noisy case. Finally, Section~\ref{sec:conclusion} draws the conclusions.

\section{Background on distance-bounding}\label{section:review}

The first lightweight DB protocol was proposed by Hancke and Kuhn's~\cite{Hancke:2005:RDB:1128018.1128472} in 2005. Its simplicity and suitability for resource-constrained devices have promoted the design of other DB protocols based on it~\cite{AvoineT-2009-isc,KimA-2011-ieeetwc,Trujillo-Rasua:2010:PDP:1926325.1926352}. All these protocols share the same design: (a) there is a slow phase\footnote{In DB protocols, a \emph{fast phase}, which generally consists on $n$ rounds, is a phase where the verifier computes RTTs. Otherwise, we say that it is a \emph{slow phase}.} where both prover and verifier generate and exchange nonces, (b) the nonces and a keyed cryptographic hash function are used to compute the answers to be sent (resp. checked) by the prover (resp. verifier). Below, we provide the main characteristics of each of these protocols, especially the technique they use to compute the answers. 

\noindent\textbf{Hancke and Kuhn's protocol~\cite{Hancke:2005:RDB:1128018.1128472}.} The answers are extracted from two $n$-bit registers such that any of the $n$ $1$-bit challenges determines which register should be used to answer. 

\noindent\textbf{Avoine and Tchamkerten's protocol~\cite{AvoineT-2009-isc}.} 
Binary trees are used to compute the prover answers: the verifier challenges 
define the unique path in the tree, and the prover answers are the vertex value 
on this path. There are several parameters impacting the memory consumption and 
the resistance to distance and mafia frauds: $l$ the number of trees and $d$ 
the depth of these trees. It holds $d\cdot l=n$, where $n$ is the number of 
rounds in the fast phase. The larger $d$, the better the frauds resistance and 
the larger the memory consumption. 

\noindent\textbf{Trujillo-Rasua, Martin and Avoine's protocol~\cite{Trujillo-Rasua:2010:PDP:1926325.1926352}.} This protocol is similar to the previous one, except that it uses particular graphs instead of trees to compute the prover answers.

\noindent\textbf{Kim and Avoine's protocol~\cite{KimA-2011-ieeetwc}.} This 
protocol, closer to the Hancke and Kuhn's 
protocol~\cite{Hancke:2005:RDB:1128018.1128472} than~\cite{AvoineT-2009-isc} 
and~\cite{Trujillo-Rasua:2010:PDP:1926325.1926352}, uses two registers to 
define the prover answers. An important additional feature is that the prover 
is able to detect a mafia fraud thanks to \emph{predefined challenges}, that 
is, challenges known by both prover and verifier. The number of predefined 
challenges impacts the frauds resistance: the larger, the better the mafia 
fraud resistance, but the lower the resistance to distance fraud.

There exist other DB protocols with different designs and computational 
complexities (\emph{e.g.,} protocols based on signatures and/or a final extra 
slow 
phase~\cite{Brands:1994:DP:188307.188361,Singelee:2007:DBN:1784404.1784415}). 
However, they are beyond the scope of this article that focuses on lightweight 
protocols only. The interested reader could refer to~\cite{AvoineBKLM-2011-jcs} 
for more details.

\section{Rationality of our proposal}\label{sec:rationality}

Being resistant to both mafia and distance frauds is the primary goal of a DB protocol. An important lower-bound for both frauds is $\left(\frac12\right)^n$, which is the probability of a naive adversary who answers randomly to the $n$ verifier challenges during the fast phase. However, this resistance is hard to attain for lightweight DB protocols. Therefore, our aim is to design a protocol that is close to this bound for both mafia and distance frauds, without requiring costly operations or an extra final slow phase. We also aim to reach the additional property of being noise-resilient. Below, the intuitions that lead to our design are explained for each of the three considered properties.

\noindent\textbf{Mafia fraud.} Among the DB protocols without final slow phase, those achieving the best mafia fraud resistance are round-dependent~\cite{AvoineT-2009-isc,KimA-2011-ieeetwc,Trujillo-Rasua:2010:PDP:1926325.1926352}. The idea is that the correct answer at the $i$th round should depend on the $i$th challenge and also on the ($i-1$) previous challenges. Our proposal also uses a round-dependent technique, the proposed construction is significantly simpler than those proposed in~\cite{AvoineT-2009-isc,KimA-2011-ieeetwc,Trujillo-Rasua:2010:PDP:1926325.1926352}, though.

\noindent\textbf{Distance fraud.} As in mafia fraud, the best protocols in term of distance fraud are round-dependent. However, round-dependency by means of predefined challenges as in the Kim and Avoine's construction~\cite{KimA-2011-ieeetwc} fails to properly resist to distance fraud. Intuitively, as more control over the challenges the prover has, the lower the resistance to distance fraud is. For this reason, our proposal allows the verifier to have full and exclusive control over the challenges.

\noindent\textbf{Noise-resiliency.} Round-dependent protocols can hardly work in noisy environments. A noise in a given round might affect all the subsequent rounds and thus, these rounds becomes useless from the security point of view. Therefore, in order to deal with noise, round-dependent protocols should be able to detect the noisy-rounds so that the prover responses can be checked considering these noise occurrences. To the best of our knowledge, our protocol is the first round-dependent DB protocol able to detect such a noisy-rounds with a high level of accuracy thanks to the simplicity of its design.

\section{Proposal}\label{section:protocol}

This section describes the DB protocol introduced in the paper. Initialization, 
execution, and decision steps are presented below and a general view is 
provided in Figure~\ref{fig:2b}.

\subsubsection{Initialization.} The prover ($P$) and the verifier ($V$) agree 
on 
(a) a security parameter $n$, (b) a timing bound $\Delta t_{\textrm{max}}$, (c) 
a pseudo random function  $PRF$ that outputs $3n$ bits, (d) a secret key $x$.

\subsubsection{Execution.} The protocol consists of a slow phase and a fast 
phase.

\subsubsection{Slow Phase.} $P$ (respectively  $V$) randomly picks a nonce 
$N_{P}$ (respectively $N_{V}$) and sends it to $V$ (respectively  $P$). 
Afterwards, $P$ and $V$ compute $PRF\left(x,N_{P},N_{V}\right)$ and divide the 
result into three $n$-bit registers $Q$, $R^0$, and $R^1$. Both $P$ and $V$ 
create the function $f_Q : \mathcal{S} \rightarrow \{0,1\}$ where $\mathcal{S}$ 
is the set of all the bit-sequences of size at most $n$ including the empty 
sequence. The function $f_Q$ is parameterized with the bit-sequence $Q = 
q_1\dots q_n$, and it outputs $0$ when the input is the empty sequence. For 
every non-empty bit-sequence $C_i = c_1\dots c_i$ where $1 \leq i \leq n$, the 
function is defined as $f_Q(C_i) = \bigoplus_{j = 1}^{i}(c_j \wedge q_j)$.

\subsubsection{Fast Phase.} In each of the $n$ rounds, $V$ picks a random 
challenge $c_{i} \in_{R} \{0,1\}$, starts a timer, and sends $c_{i}$ to $P$. 
Upon reception of $c_i$, $P$ replies with $r_{i} =R_{i}^{c_i} \oplus f_Q(C_i)$ 
where $C_i = c_1...c_i$. Once $V$ receives $r_i$, he stops the timer and 
computes the round-trip-time $\Delta t_i$.

\subsubsection{Decision.} If $\Delta t_{i}<\Delta 
t_{\textrm{max}}$ and $r_{i}=R_{i}^{c_{i}}\oplus f_Q(C_i)$  $\forall \ i \in 
\{1, 2, ..., n\}$ then the protocol succeeds\footnote{$\Delta t_{\textrm{max}}$ 
is a system parameter that implicitly 
represents the maximum allowed distance between the prover and the verifier.}.
\begin{figure*}
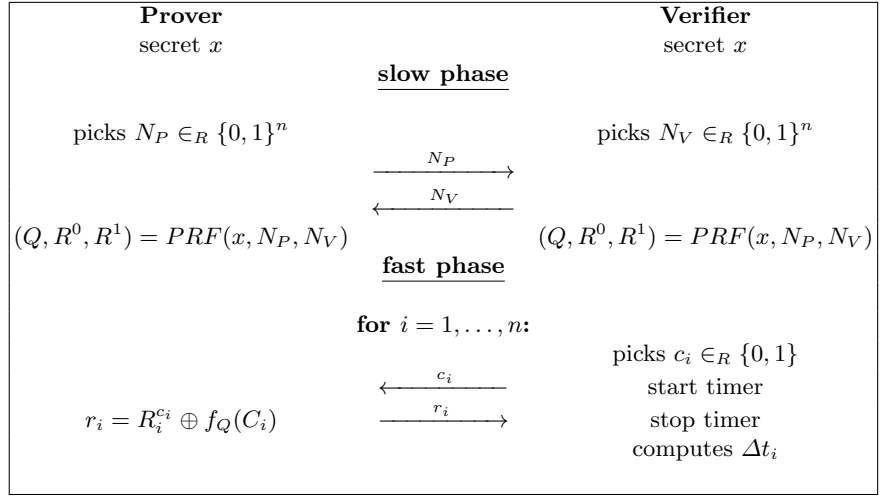

\centering
 {\renewcommand\normalsize{\footnotesize}%
  \normalsize

\begin{tabular}{|ccc|}
\hline
\normalsize{\textbf{Prover}}&&\normalsize{\textbf{Verifier}}\\
secret $x$&&secret $x$\\
&\textbf{\underline{slow phase}}&\\
&&\\
picks $N_{P} \in_{R} \{0, 1\}^n$ &&picks $N_{V}\in_{R} \{0, 1\}^n$\\
&$\xrightarrow{\eqspace N_{P}\eqspace}$&\\
&$\xleftarrow{\eqspace N_{V}\eqspace}$&\\

$(Q,R^0,R^1)=PRF(x,N_{P},N_{V})$&&$(Q,R^0,R^1)=PRF(x,N_{P},N_{V})$\\
&\textbf{\underline{fast phase}}&\\
&&\\
&\textbf{for $i=1,\dots,n$:}&\\
&&picks $c_{i} \in_{R} \{0,1\}$\\
&$\xleftarrow{\eqspace c_i\eqspace}$& start timer\\
$r_{i} =R_{i}^{c_i}\oplus f_Q(C_i)$&$\xrightarrow{\eqspace r_i\eqspace}$&stop 
timer\\
 && computes $\Delta t_i$\\
 &&\\\hline
 \end{tabular}}

\caption{Protocol description\label{fig:2b}}
\end{figure*}

\section{Resistance to mafia fraud}\label{sec:mafia}

Analyses of DB protocols usually consider two strategies to evaluate the resistance against a mafia fraud: the pre-ask and the post-ask strategies~\cite{AvoineBKLM-2011-jcs}. Although considering these two strategies only do not provide a formal security proof, this evaluates the resistance of the protocol, at least against these well-known attack strategies, and is the only way known so far to evaluate DB protocols. Providing a formal security proof of DB protocols would be interesting but is clearly out of the scope of this paper.

This section reminds the concept of pre-ask strategy\footnote{Note that the 
post-ask strategy is not relevant in protocols without an extra final slow 
phase~\cite{AvoineBKLM-2011-jcs}}, then identifies the adversarial behavior 
that maximizes the success probability when considering the pre-ask strategy, 
and the section finally computes this probability.

\subsection{Best behavior}

The \emph{pre-ask strategy} consists first, for the adversary, to relays the 
initial slow phase. Then, she runs the fast phase with the prover. With the 
answers she obtains, she finally executes the fast phase with the verifier. In 
our case, we consider that the adversary first sends a sequence of challenges 
$\tilde{c}_1...\tilde{c}_n$ to the legitimate prover and receives 
$\tilde{r}_1...\tilde{r}_n$ where $\tilde{r}_i = R_i^{\tilde{c}_i} \oplus 
f_Q(\tilde{C}_i)$ and $\tilde{C}_i = \tilde{c}_1...\tilde{c}_i$ for every $i 
\in \{1,...,n\}$. Next, she executes the fast phase with the verifier receiving 
the challenges $c_1...c_n$. Given $\tilde{r}_1...\tilde{r}_n$, the adversarial 
behavior that maximizes the success probability is provided in 
Theorem~\ref{theo_best_strategy}.

\begin{theorem}\label{theo_best_strategy}
The adversary's behavior that maximizes her mafia fraud success probability 
with a pre-ask strategy is: (a) For every round $i$ where $c_i \neq 
\tilde{c}_i$, answer randomly. (b) For every round $i$ where $c_i = 
\tilde{c}_i$, guess the value $f_Q(C_i) \oplus f_Q(\tilde{C}_i)$ and answer 
with the value $f_Q(C_i) \oplus f_Q(\tilde{C}_i) \oplus \tilde{r}_i$ where $C_i 
= c_1...c_i$ and $\tilde{C}_i = \tilde{c}_1...\tilde{c}_i$.
\end{theorem}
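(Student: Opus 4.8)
The plan is to fix the adversary's view after the pre-ask phase --- namely the challenges $\tilde{c}_1\dots\tilde{c}_n$ she sent to the prover, the replies $\tilde{r}_1\dots\tilde{r}_n$ she received, and the verifier's challenges $c_1\dots c_n$ --- and to study the conditional distribution of the \emph{target} answers $r_i = R_i^{c_i}\oplus f_Q(C_i)$ that the verifier expects, over the remaining randomness of the triple $(Q,R^0,R^1)=PRF(x,N_P,N_V)$. Since the verifier accepts only if $a_i=r_i$ for every round $i$, maximizing the success probability amounts to maximizing $\Pr[\,a_i=r_i\ \forall i\,]$ for the adversary's (possibly randomized) answers $a_i$, so the whole theorem reduces to identifying, round by round, how much of $r_i$ is predictable from the view. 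I would partition the rounds into $E=\{i: c_i=\tilde{c}_i\}$ and $D=\{i: c_i\neq\tilde{c}_i\}$.

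First I would treat the rounds in $E$. Here $c_i=\tilde{c}_i$ forces $R_i^{c_i}=R_i^{\tilde{c}_i}$, so cancelling this common mask against the observed reply yields
\[
r_i\oplus\tilde{r}_i = f_Q(C_i)\oplus f_Q(\tilde{C}_i)=\bigoplus_{j=1}^{i}q_j\wedge(c_j\oplus\tilde{c}_j),
\]
a quantity depending on $Q$ but not on $R^0,R^1$. Because $\tilde{r}_i$ is already known, predicting $r_i$ is exactly equivalent to predicting $\delta_i:=f_Q(C_i)\oplus f_Q(\tilde{C}_i)$; hence every answer for an $E$-round may, without loss of generality, be written as $a_i=\tilde{r}_i\oplus b_i$ with $b_i$ a guess of $\delta_i$, which is precisely statement (b).

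Next I would handle $D$, where the key claim is that for $i\in D$ the target $r_i$ is a uniform bit, independent both of the entire view and of all the other targets. This rests on a disjointness argument over the register bits: every observation $\tilde{r}_j$ is masked by the bit $R_j^{\tilde{c}_j}$, while for $i\in D$ the bit $R_i^{c_i}$ masking $r_i$ comes from the opposite register ($c_i\neq\tilde{c}_i$) at position $i$, so it coincides with no $R_j^{\tilde{c}_j}$ and with no other mask $R_j^{c_j}$. Since the $2n$ bits of $(R^0,R^1)$ together with the $n$ bits of $Q$ are jointly uniform, this fresh unobserved bit makes $r_i$ uniform and independent of everything the adversary can condition on; therefore no answer beats probability $\tfrac12$ and answering randomly is optimal, giving statement (a). The same disjointness shows each $\tilde{r}_j$ is independent of $Q$, so the view leaks nothing about $Q$ and the guesses $b_i$ on the $E$-rounds cannot be sharpened by any side information.

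I would close by assembling the pieces: since the $D$-targets are independent uniform bits that are moreover independent of the $E$-targets, the acceptance probability factors into a $D$-part (fixed at $\tfrac12$ per round by any answer) times an $E$-part (optimized by guessing each $\delta_i$), so the per-round behavior in (a) and (b) is simultaneously optimal. The main obstacle is precisely the independence/disjointness bookkeeping in the $D$-round analysis --- carefully verifying that no register bit is reused across the prover and verifier phases and that the observations are information-theoretically independent of $Q$; once this is nailed down the optimality is immediate, because the problem decouples round by round.
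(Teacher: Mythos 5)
Your proof is correct and follows essentially the same route as the paper's: for rounds with $c_i=\tilde{c}_i$ the cancellation $R_i^{c_i}=R_i^{\tilde{c}_i}$ reduces answering correctly to guessing $f_Q(C_i)\oplus f_Q(\tilde{C}_i)$ (the paper's Lemma~\ref{lemma_best_strategy}), and for rounds with $c_i\neq\tilde{c}_i$ the mask $R_i^{c_i}$ is a fresh bit independent of the view, forcing success probability $\frac{1}{2}$ whatever the answer. The only difference is that your one-time-pad disjointness bookkeeping (joint uniformity of the $D$-round targets and independence of the observed replies from $Q$) makes explicit the across-round independence that the paper's terse per-round argument leaves implicit, which is a welcome strengthening rather than a new approach.
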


\begin{proof}

First, let us prove the following lemma.

\begin{lemma}\label{lemma_best_strategy}
Given that $c_i = \tilde{c}_i$, the adversary's behavior maximizing her success 
probability at the $i$th round is equivalent to the best behavior for guessing 
the value $f_Q(C_i) \oplus f_Q(\tilde{C}_i)$.
\end{lemma}

\begin{proof}
Given that $c_i = \tilde{c}_i$, then $R_i^{c_i} = R_i^{\tilde{c}_i}$, which 
means that $\tilde{r}_i \oplus f_Q(\tilde{C}_i) = r_i \oplus f_Q(C_i) 
\Rightarrow r_i =  f_Q(\tilde{C}_i) \oplus f_Q(C_i)\oplus \tilde{r}_i$. 
Therefore, either the adversary guesses $f_Q(C_i) \oplus f_Q(\tilde{C}_i)$ or 
the adversary losses this round.
\end{proof}

In the case where $c_i \neq \tilde{c}_i$, the prover's response $\tilde{r}_i$ 
does not help the adversary since $R_i^{c_i}$ and $R_i^{\tilde{c}_i}$ are 
independent values. Therefore, there does not exist any best behavior, 
\emph{i.e.,} whatever the adversary behavior, her success probability at this 
round is $\frac{1}{2}$. This result and Lemma~\ref{lemma_best_strategy} 
conclude the proof. \qed
\end{proof}

\subsection{Adversary's success probability}

Given the adversary's behavior provided by Theorem~\ref{theo_best_strategy}, Theorem~\ref{theo_mafia} provides a recursive way to compute her success probability.

\begin{theorem}\label{theo_mafia}
Let $M_i$ be the event that the adversary has won the first $i$ rounds by 
following her best behavior with the pre-ask strategy. Let $S_i$ be the event 
that the adversary guesses $f_Q(C_i) \oplus f_Q(\tilde{C}_i)$ at the $i$th 
round. The probability $\Pr(M_i)$ can be recursively computed as follows:

$$
\begin{array}{r l}
\Pr(M_i) = & \left(\frac{1}{2}\right)^i + \Pr(M_i | C_{i} \neq 
\tilde{C}_i)\left(1 - \left(\frac{1}{2}\right)^i \right).\\
\end{array}
$$

$$
\begin{array}{r l}
\Pr(M_i| C_i \neq \tilde{C}_i) = & \Pr(M_i| C_i \neq \tilde{C}_i, M_{i-1})\\
& \times\left(
\frac{1}{2^{i} - 1}  + \Pr(M_{i-1} | C_{i-1} \neq \tilde{C}_{i-1})(1 
-\frac{1}{2^{i} - 1} )\right).\\
\end{array}
$$

$$
\begin{array}{r l}
\Pr(M_i| C_i \neq \tilde{C}_i, M_{i-1}) = & \Pr(S_{i-1}| C_{i-1} \neq 
\tilde{C}_{i-1}, M_{i-1})\left(1 - \frac{2^{i-1}}{2^i-1}\right)+ 
\frac{1}{2}\frac{2^{i-1}}{2^i-1}.\\
\end{array}
$$

$$
\begin{array}{r l}
\Pr(S_{i}| C_{i} \neq \tilde{C}_{i}, M_{i})= & \frac{1}{2} + 
\frac{1}{2}\frac{\Pr(S_{i-1} | M_{i-1}, C_{i-1} \neq 
\tilde{C}_{i-1})\Pr(M_{i-1}| C_{i-1} \neq 
\tilde{C}_{i-1})\left(1-\left(\frac{1}{2} \right)^{i-1} 
\right)\frac{1}{2}}{\Pr(M_{i} | C_{i} \neq 
\tilde{C}_{i})\left(1-\left(\frac{1}{2} \right)^{i} \right)}.\\
\end{array}
$$

\noindent \text{Where } $\Pr(M_{1} | C_{1} \neq \tilde{C}_{1}) = {1}/{2}$ 
\text{ and } $\Pr(S_{1}| C_{1} \neq \tilde{C}_{1}, M_{1}) = {1}/{2}$ \text{ are 
the stopping} \text{ conditions}.
\end{theorem}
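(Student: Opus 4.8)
The workhorse of the proof is the identity $f_Q(C_i)\oplus f_Q(\tilde C_i)=\bigoplus_{j=1}^{i}\bigl((c_j\oplus\tilde c_j)\wedge q_j\bigr)$, which I abbreviate by $D_i$. It shows that $D_i$ is unchanged when $c_i=\tilde c_i$ and acquires a fresh summand $q_i$ when $c_i\neq\tilde c_i$; since the $q_j$ are unknown $PRF$ output bits, a first-time disagreement at round $i$ XORs in a uniform bit that is independent of everything the adversary has seen, which pins the probability of correctly guessing $D_i$ to $\tfrac12$ in that case. Combined with Theorem~\ref{theo_best_strategy}, this reformulates each round as: the adversary wins with probability $\tfrac12$ if $c_i\neq\tilde c_i$, and wins exactly when she guesses $D_i$ if $c_i=\tilde c_i$. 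For the first displayed equation I would condition on $C_i=\tilde C_i$. As each $c_j$ is uniform and independent of $\tilde c_j$, $\Pr(C_i=\tilde C_i)=(\tfrac12)^i$; and when $C_i=\tilde C_i$ holds, every round has $R_j^{c_j}=R_j^{\tilde c_j}$ and $f_Q(C_j)=f_Q(\tilde C_j)$, so $r_j=\tilde r_j$ and the adversary wins by mere relay, i.e.\ $\Pr(M_i\mid C_i=\tilde C_i)=1$. The law of total probability then gives the first formula.

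For the second equation I would partition $C_i\neq\tilde C_i$ according to whether the first disagreement already lies in the prefix of length $i-1$ or occurs for the first time at round $i$. A direct count gives $\Pr(C_{i-1}=\tilde C_{i-1}\mid C_i\neq\tilde C_i)=\tfrac{1}{2^i-1}$. On $C_{i-1}=\tilde C_{i-1}$ the coinciding prefix is won for free, so $\Pr(M_{i-1}\mid C_{i-1}=\tilde C_{i-1})=1$, whereas on the complementary event the round-$i$ challenge is irrelevant to the first $i-1$ rounds, so $\Pr(M_{i-1}\mid C_{i-1}\neq\tilde C_{i-1},C_i\neq\tilde C_i)=\Pr(M_{i-1}\mid C_{i-1}\neq\tilde C_{i-1})$. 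Factoring $\Pr(M_i\mid C_i\neq\tilde C_i)=\Pr(M_i\mid M_{i-1},C_i\neq\tilde C_i)\,\Pr(M_{i-1}\mid C_i\neq\tilde C_i)$ and expanding the second factor by total probability over these two cases reproduces the second formula.

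For the third equation I would condition the round-$i$ win on the value of $c_i\oplus\tilde c_i$. If $c_i\neq\tilde c_i$ the response is random and correct with probability $\tfrac12$; if $c_i=\tilde c_i$ then $D_i=D_{i-1}$, so winning is equivalent to the adversary's running guess of $D_{i-1}$ being correct, that is, to the event $S_{i-1}$ (note that $c_i=\tilde c_i$ together with $C_i\neq\tilde C_i$ forces $C_{i-1}\neq\tilde C_{i-1}$). Inserting $\Pr(c_i\neq\tilde c_i\mid C_i\neq\tilde C_i)=\tfrac{2^{i-1}}{2^i-1}$ then yields the third formula.

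The fourth equation is the delicate one and I would derive it by Bayes' rule. A matched round leaves both $D$ and the adversary's guess untouched, so $S_i=S_{i-1}$ and $S_i$ is forced true by $M_i$; a fresh mismatch re-randomizes $D_i$ and makes $S_i$ an independent fair coin. Hence $\Pr(S_i\mid C_i\neq\tilde C_i,M_i)=\tfrac12+\tfrac12\,\Pr(c_i=\tilde c_i\mid C_i\neq\tilde C_i,M_i)$, and I would expand the last conditional probability through the joint event $\Pr(c_i=\tilde c_i,\,C_{i-1}\neq\tilde C_{i-1},\,M_{i-1},\,S_{i-1})=\tfrac12\,\Pr(S_{i-1}\mid M_{i-1},C_{i-1}\neq\tilde C_{i-1})\,\Pr(M_{i-1}\mid C_{i-1}\neq\tilde C_{i-1})\,(1-(\tfrac12)^{i-1})$, dividing by $\Pr(M_i\mid C_i\neq\tilde C_i)(1-(\tfrac12)^i)=\Pr(M_i,C_i\neq\tilde C_i)$; the $i=1$ stopping values come from the single-round analysis. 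The main obstacle, where I expect to spend the most care, is the bookkeeping of how much the adversary still knows about $D$ from one round to the next --- precisely what the auxiliary events $S_i$ are designed to track --- and the justification that, once we condition on $C_i\neq\tilde C_i$, the round-$i$ match status may be treated as independent of the prior success $M_{i-1}$, which is the step that produces the clean factor $\tfrac{2^{i-1}}{2^i-1}$. I would sanity-check the assembled recursion on the small cases $i=1,2$ before trusting it in general.
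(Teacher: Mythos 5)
Your proposal is correct and follows essentially the same route as the paper's proof: the same total-probability split on $C_i=\tilde C_i$, the same factorization through $M_{i-1}$ with $\Pr(C_{i-1}=\tilde C_{i-1}\mid C_i\neq\tilde C_i)=\frac{1}{2^i-1}$, the same case split on $c_i=\tilde c_i$ reducing the matched case to $S_{i-1}$, and the same Bayes expansion of $\Pr(c_i=\tilde c_i\mid C_i\neq\tilde C_i,M_i)$ for the fourth equation. Your explicit identity $f_Q(C_i)\oplus f_Q(\tilde C_i)=\bigoplus_{j=1}^{i}\bigl((c_j\oplus\tilde c_j)\wedge q_j\bigr)$ is left implicit in the paper, and the independence-of-match-status-from-$M_{i-1}$ step you flag as delicate is indeed used by the paper without further justification.
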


\begin{proof}

If $C_i = \tilde{C}_i$, the adversary knows that $f_Q(C_i) \oplus 
f_Q(\tilde{C}_i) = 0$ and thus, her success probability until the $i$th round 
is $1$, which means that $\Pr(M_i | C_{i} = \tilde{C}_{i}) = 1$. Considering 
that $\Pr(C_i = \tilde{C}_i) = \left({1}/{2}\right)^i$ and $\Pr(C_i \neq 
\tilde{C}_i) = 1- \left({1}/{2}\right)^i$, then $\Pr(M_i)$ can be expressed as 
follows:
\begin{equation}\label{reationality_eq_1}
\Pr(M_i) = \left(\frac{1}{2}\right)^i + \Pr(M_i | C_{i} \neq 
\tilde{C}_{i})\left(1 - \left(\frac{1}{2}\right)^i \right).
\end{equation}

\noindent Equation~\ref{reationality_eq_1} states that the computation of 
$\Pr(M_i)$ requires $\Pr(M_i | C_{i} \neq \tilde{C}_{i})$. Note that $M_i$ 
holds if $M_{i-1}$ holds, so:
\begin{align}\label{reationality_eq_1.5}
\Pr(M_i | C_{i} \neq \tilde{C}_{i}) =& \Pr(M_i | C_{i} \neq \tilde{C}_{i}, 
M_{i-1})\Pr(M_{i-1} | C_{i} \neq \tilde{C}_{i}).
\end{align}

\noindent Given that $M_{i-1}$ depends on whether $C_{i-1} = \tilde{C}_{i-1}$ 
or not, and considering that $\Pr(C_{i-1} = \tilde{C}_{i-1} | C_{i} \neq 
\tilde{C}_{i}) = {1}/({2^{i} - 1})$, then $\Pr(M_{i-1} | C_{i} \neq 
\tilde{C}_{i})$ can be transformed as follows:
\begin{align}\label{reationality_eq_2}
&\Pr(M_{i-1}| C_i \neq \tilde{C}_i)  && =  \Pr(M_{i-1} | C_i \neq \tilde{C}_i, 
C_{i-1}= \tilde{C}_{i-1}) \Pr(C_{i-1} = \tilde{C}_{i-1} | C_{i} \neq 
\tilde{C}_{i}) 
\notag\\
& && + \Pr(M_{i-1} | C_i \neq \tilde{C}_i, C_{i-1} \neq 
\tilde{C}_{i-1})\Pr(C_{i-1} \neq \tilde{C}_{i-1} | C_{i} \neq 
\tilde{C}_{i}) 
\notag\\
& && = \frac{1}{2^{i} - 1}  + \Pr(M_{i-1} | C_{i-1} \neq 
\tilde{C}_{i-1})\frac{2^{i} - 2}{2^{i} - 1}.
\end{align}

\noindent Assuming that $\Pr(M_i | C_{i} \neq \tilde{C}_{i}, M_{i-1})$ can be 
computed for every $i$, then according to Equations~\ref{reationality_eq_1.5} 
and~\ref{reationality_eq_2}, $\Pr(M_i | C_{i} \neq \tilde{C}_{i})$ can be 
recursively computed as follows:

\begin{align}\label{reationality_eq_3}
\Pr(M_{i}| C_i \neq \tilde{C}_i) = &\Pr(M_i | C_{i} \neq \tilde{C}_{i}, 
M_{i-1})\notag\\
& \times\left( \frac{1}{2^{i} - 1}  + \Pr(M_{i-1} | C_{i-1} \neq 
\tilde{C}_{i-1})\frac{2^{i} - 2}{2^{i} - 1} \right).
\end{align}

\noindent Note that, the result $\Pr(M_{1} | C_{1} \neq \tilde{C}_{1}) = 
{1}/{2}$ can be used as the stopping condition for the recursion defined in 
Equation~\ref{reationality_eq_3}. This recursion simplifies the analysis of 
$\Pr(M_i)$: instead of analyzing the probability to win all the $i$ rounds, 
only the probability to win the $i$th round is needed. Since it depends on the 
adversary's behavior, and the latter depends on whether $c_i = \tilde{c}_i$ or 
not, we compute  $\Pr(M_i | C_{i} \neq \tilde{C}_{i}, M_{i-1})$ as follows:
\begin{align}\label{reationality_eq_2.5}
&\Pr(M_i| C_i \neq \tilde{C}_i, M_{i-1}) = && \Pr(M_i| C_{i-1} \neq 
\tilde{C}_{i-1}, M_{i-1}, c_i = \tilde{c}_i) 
\Pr(c_i = \tilde{c}_i|C_i \neq \tilde{C}_i) \notag\\
                                       & && +\Pr(M_i| C_i \neq 
                                       \tilde{C}_i, M_{i-1}, c_i \neq 
                                       \tilde{c}_i)\Pr(c_i \neq 
                                       \tilde{c}_i|C_i \neq \tilde{C}_i).
\end{align}

\noindent When $c_i \neq \tilde{c}_i$ the adversary answers randomly and thus 
$\Pr(M_i| C_i \neq \tilde{C}_i, M_{i-1}, c_i \neq \tilde{c}_i) = {1}/{2}$. 
Considering this result and that  $\Pr(c_i \neq \tilde{c}_i|C_i \neq 
\tilde{C}_i) = {2^{i-1}}/({2^i-1})$,  Equation~\ref{reationality_eq_2.5} yields 
to:
\begin{align}\label{reationality_eq_4}
\Pr(M_i| C_i \neq \tilde{C}_i, M_{i-1})= & \frac{2^{i-2}}{2^i-1}\notag\\
& +\Pr(M_i| 
C_{i-1} \neq \tilde{C}_{i-1}, M_{i-1}, c_i = 
\tilde{c}_i)\left( 1-\frac{2^{i-1}}{2^i-1}\right).
\end{align}


\noindent From Equation~\ref{reationality_eq_4}, we deduce that computing 
$\Pr(M_i)$ requires $\Pr(M_i| C_{i-1} \neq \tilde{C}_{i-1}, M_{i-1}, c_i = 
\tilde{c}_i)$. Theorem~\ref{lemma_mafia_1} states that the adversary's behavior 
in this case is to guess $f_Q(C_i) \oplus f_Q(\tilde{C}_i)$. Hence:
\begin{align}\label{eq_1_corollary_mafia_1}
&\Pr(M_i| C_{i-1} \neq \tilde{C}_{i-1}, M_{i-1}, c_i = \tilde{c}_i) =\Pr(S_i| 
C_{i-1} \neq \tilde{C}_{i-1}, M_{i-1}, c_i = \tilde{c}_i)
\end{align}

\noindent We now aim at computing $\Pr(S_i| C_{i-1} \neq \tilde{C}_{i-1}, 
M_{i-1}, c_i = \tilde{c}_i)$. Since $c_i = \tilde{c}_i$, then  $f_Q(C_i) \oplus 
f_Q(\tilde{C}_i) = f_Q(C_{i-1}) \oplus f_Q(\tilde{C}_{i-1})$. Therefore, the 
adversary's strategy  maximizing $\Pr(S_i| C_{i-1} \neq \tilde{C}_{i-1}, 
M_{i-1}, c_i = \tilde{c}_i)$ consists in holding her previous guess for the 
($i-1$)th round. So:
\begin{align}\label{eq_2_corollary_mafia_1}
&\Pr(S_i| C_{i-1} \neq \tilde{C}_{i-1}, M_{i-1}, c_i = \tilde{c}_i) 
=\Pr(S_{i-1}| C_{i-1} \neq \tilde{C}_{i-1}, M_{i-1}).
\end{align}

\noindent As pointed out by Equation~\ref{eq_2_corollary_mafia_1} and 
Equation~\ref{eq_1_corollary_mafia_1}, computing $\Pr(M_i| C_{i-1} \neq 
\tilde{C}_{i-1}, M_{i-1}, c_i = \tilde{c}_i)$ requires $\Pr(S_{i-1}| C_{i-1} 
\neq \tilde{C}_{i-1}, M_{i-1})$. Since it is indexed by $i-1$, we assume that 
$\Pr(M_j)$ is already computed for every $j < i$ and as shown by 
Lemma~\ref{lemma_mafia_1}, $\Pr(S_{i-1}| C_{i-1} \neq \tilde{C}_{i-1}, 
M_{i-1})$ can be recursively computed.

\begin{lemma}\label{lemma_mafia_1}
Given that $\Pr(M_{j} | C_{j} \neq \tilde{C}_{j})$ can be computed for every $j 
\leq i$, then $\Pr(S_i| C_{i} \neq \tilde{C}_{i}, M_{i})$ can be recursively 
computed as follows:

\begin{align}
\Pr(S_{i}| C_{i} \neq \tilde{C}_{i}, M_{i})=&\frac{1}{2}+ \frac{\Pr(S_{i-1} | 
M_{i-1}, C_{i-1} \neq 
\tilde{C}_{i-1})}{\Pr(M_{i} | C_{i} \neq 
\tilde{C}_{i})\left(1-\left(\frac{1}{2} \right)^{i} \right)}\notag\\
 & \times\Pr(M_{i-1}| C_{i-1} \neq 
\tilde{C}_{i-1})\left(\frac{1}{2}-\left(\frac{1}{2} \right)^{i} \right).\notag
\end{align}
where $\Pr(S_{1}| C_{1} \neq \tilde{C}_{1}, M_{1}) = \frac{1}{2}$ is the 
stopping condition.
\end{lemma}

\begin{proof}

\noindent By definition of $f_Q(.)$, and because $\Pr(q_i = 0) = \Pr(q_i = 1) = 
{1}/{2}$, $\Pr(S_{i}| C_{i} \neq \tilde{C}_{i}, M_{i}, c_i \neq \tilde{c}_i) = 
{1}/{2}$. Moreover, Theorem~\ref{theo_best_strategy} states that $M_i$ and 
$S_i$ are equivalent when $c_i = \tilde{c}_i$, hence $\Pr(S_{i}| C_{i} \neq 
\tilde{C}_{i}, M_{i}, c_i = \tilde{c}_i) = 1$. Considering both results we 
obtain:

\begin{align}\label{eq_1_mafia_1}
&\Pr(S_{i}| C_{i} \neq \tilde{C}_{i}, M_{i}) = \frac{1}{2} + 
\frac{1}{2}\Pr(c_{i} = \tilde{c}_{i}|C_{i} \neq \tilde{C}_{i}, 
M_{i}).
\end{align}


\noindent Finally, considering that $\Pr(C_{i} \neq \tilde{C}_{i}) = 
1-\left(\frac{1}{2} \right)^{i} $ and $\Pr(c_i = \tilde{c}_i, C_{i} \neq 
\tilde{C}_{i}) = \left(1-\left(\frac{1}{2} \right)^{i-1} \right)\frac{1}{2}$, 
the probability $\Pr(c_i = \tilde{c}_i|C_{i} \neq \tilde{C}_{i}, M_{i})$ can be 
expressed as follows:
\begin{align}\label{eq_2_mafia_1}
\Pr(c_i = \tilde{c}_i|C_{i} \neq \tilde{C}_{i}, M_{i})  = & \frac{\Pr(c_i = 
\tilde{c}_i, C_{i} \neq \tilde{C}_{i}, M_{i}, 
M_{i-1})}{\Pr(C_{i} \neq \tilde{C}_{i}, M_{i})} \notag\\
= & \frac{\Pr(M_i | M_{i-1}, c_i = \tilde{c}_i, C_{i} \neq 
\tilde{C}_{i})}{\Pr(M_{i} | C_{i} \neq \tilde{C}_{i})\Pr(C_{i} \neq 
\tilde{C}_{i})} \notag\\
& \times\Pr(M_{i-1}| c_i = \tilde{c}_i, C_{i} \neq 
\tilde{C}_{i})\Pr(c_i = \tilde{c}_i, C_{i} \neq \tilde{C}_{i})\notag\\
= & \frac{\Pr(S_{i-1} | M_{i-1}, C_{i-1} \neq \tilde{C}_{i-1})}{\Pr(M_{i} | 
C_{i} 
\neq \tilde{C}_{i})\left(1-\left(\frac{1}{2} \right)^{i} 
\right)}\\\notag
& \times\Pr(M_{i-1}| C_{i-1} \neq 
\tilde{C}_{i-1})\left(\frac{1}{2}-\left(\frac{1}{2} \right)^{i} \right).\notag
\end{align}
\noindent Equations~~\ref{eq_1_mafia_1} and~\ref{eq_2_mafia_1} yield the 
expected result. 

\end{proof}

\noindent Lemma~\ref{lemma_mafia_1} together with  
Equations~\ref{reationality_eq_1},~\ref{reationality_eq_3},~\ref{reationality_eq_4},~\ref{eq_1_corollary_mafia_1},
 and~\ref{eq_2_corollary_mafia_1}, conclude the proof of this theorem.

\end{proof}

\section{Resistance to distance fraud} \label{sec:distance}

This section analyzes the adversary success probability when mounting a 
distance fraud. As stated in~\cite{AvoineBKLM-2011-jcs}, the common way to 
analyze the resistance to distance fraud is by considering the early-reply 
strategy instead of the pre-ask strategy. This strategy consists on sending the 
responses in advance, \emph{i.e.,} before receiving the challenges. Doing so, 
the adversary gains some time and might pass the timing constraint. In this 
section, the behavior that maximizes the success probability using the 
early-reply strategy is identified, and then a recursive way to compute the 
resistance w.r.t. a distance fraud is provided.

\subsection{Best behavior}

With the early-reply strategy, in order to send a response in advance in the 
$i$th round with probability of being correct greater that ${1}/{2}$, the 
adversary must send either $R_i^{0} \oplus f_Q({C}_{i-1}||0)$ or $R_i^{1} 
\oplus f_Q({C}_{i-1}||1)$ where ${C}_{i-1}$ is the sequence of challenges sent 
by the verifier until the $(i-1)$th round. Theorem~\ref{theo_strategy_distance} 
shows that, guessing the values $f_Q({C}_{i})$ for every $i \in \{1, ..., 
n-1\}$ is needed to maximize the adversary success probability.

\begin{theorem}\label{theo_strategy_distance}
Let ${C}_{i}$ be the sequence of challenges $c_1...c_i$ sent by the verifier 
until the $i$th round ($i \geq 1$). The adversary's behavior that maximizes her 
distance 
fraud success probability is equivalent to the best behavior for guessing the 
values \\$f_Q(C_1), f_Q(C_2), ..., f_Q(C_{n-1})$.
\end{theorem}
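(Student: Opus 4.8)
The plan is to reduce, round by round, the malicious prover's decision to a pure guessing problem for the accumulator values $f_Q(C_i)$. The starting point is a rewriting of the correct $i$th response. Since $f_Q(C_i)=f_Q(C_{i-1})\oplus(c_i\wedge q_i)$ and $R_i^{c_i}=R_i^0\oplus\bigl(c_i\wedge(R_i^0\oplus R_i^1)\bigr)$, the value the verifier expects at round $i$ is
\begin{equation*}
r_i=R_i^{c_i}\oplus f_Q(C_i)=R_i^0\oplus f_Q(C_{i-1})\oplus(c_i\wedge\delta_i),\qquad \delta_i:=R_i^0\oplus R_i^1\oplus q_i .
\end{equation*}
The key observation is that $R_i^0$ and $\delta_i$ are fixed quantities derived from the key, hence known to a malicious prover, whereas $c_i$ is revealed only after the response must be committed and $f_Q(C_{i-1})$ depends on the unknown past challenges $c_1\dots c_{i-1}$.

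First I would argue that, up to the known offset $R_i^0$, committing to a response $a_i$ is the same as committing to an effective bit $\hat h_i:=a_i\oplus R_i^0$, and that the round is won if and only if $\hat h_i=f_Q(C_{i-1})\oplus(c_i\wedge\delta_i)$. Two cases then arise according to the adversary-known bit $\delta_i$. When $\delta_i=0$ the target collapses to $f_Q(C_{i-1})$, so the round is won exactly when the prover's guess of $f_Q(C_{i-1})$ is correct. When $\delta_i=1$ the target is $f_Q(C_{i-1})\oplus c_i$; because $c_i$ is uniform and independent of everything the prover knows at commit time, this target is uniform and independent of $\hat h_i$, so the round is won with probability $\tfrac12$ no matter how $\hat h_i$ is chosen. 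This matches the two-candidate description preceding the theorem: fixing a guess $g$ of $f_Q(C_{i-1})$, the two admissible responses are $\hat h_i=g$ and $\hat h_i=g\oplus\delta_i$, the choice between them amounting to a guess of $c_i$.

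Next I would aggregate over the $n$ rounds. Round $i$ requires a guess of $f_Q(C_{i-1})$, so the whole fast phase involves $f_Q(C_0),f_Q(C_1),\dots,f_Q(C_{n-1})$; since $f_Q(C_0)$ is the value of $f_Q$ on the empty sequence and therefore equals $0$, the only unknowns the prover must commit to are $f_Q(C_1),\dots,f_Q(C_{n-1})$. The rounds with $\delta_i=1$ contribute an unavoidable, non-improvable factor $\tfrac12$ and are decoupled from these guesses, whereas the rounds with $\delta_i=0$ are won precisely when the corresponding guess is correct. Consequently any strategy is fully described by the values it assigns to $f_Q(C_1),\dots,f_Q(C_{n-1})$, and maximising the distance-fraud success probability is the same as choosing these guesses optimally, which is exactly the asserted equivalence.

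The step I expect to be delicate is the independence argument in the $\delta_i=1$ case, and more generally making precise that the $\delta_i=1$ rounds neither help nor constrain the guesses used in the $\delta_i=0$ rounds. Since everything is committed in advance, the responses carry no feedback to the prover, and the successive accumulators are linked by $f_Q(C_i)=f_Q(C_{i-1})\oplus(c_i\wedge q_i)$; I would therefore need to check carefully that conditioning on the outcomes of the $\delta_i=1$ rounds leaves the joint distribution of $f_Q(C_1),\dots,f_Q(C_{n-1})$ untouched, so that the guessing problem is genuinely isolated from the trivial rounds. The remaining bookkeeping is routine.
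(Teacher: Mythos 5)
Your first two paragraphs are correct and are, modulo notation, exactly the paper's proof: the two admissible advance responses at round $i$ are $R_i^{0}\oplus f_Q(C_{i-1})$ and $R_i^{1}\oplus f_Q(C_{i-1})\oplus q_i$, and since $R_i^{0}$, $R_i^{1}$, and $q_i$ are known to the malicious prover, committing a response is the same as committing a guess of $f_Q(C_{i-1})$ (together, when the two candidates differ, with a guess of $c_i$). Your bookkeeping via $\delta_i=R_i^{0}\oplus R_i^{1}\oplus q_i$ is a clean repackaging of this, and the paper's proof stops precisely at this per-round equivalence.

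Your aggregation step, however, contains a genuine error, and the verification you defer in your last paragraph would fail rather than succeed. The $\delta_i=1$ rounds are \emph{not} decoupled from the guessing problem: when $q_i=1$ and $R_i^{0}=R_i^{1}$ (so $\delta_i=1$), winning round $i$ is the event $c_i=\hat h_i\oplus f_Q(C_{i-1})$, and since $f_Q(C_i)=f_Q(C_{i-1})\oplus c_i$, conditioning on that win forces $f_Q(C_i)=\hat h_i$, a value the prover herself chose. Concretely, take $i=1$ (so $f_Q(C_0)=0$) with $q_1=1$ and $R_1^{0}=R_1^{1}$, followed by a round with $\delta_2=0$ (say $q_2=0$, $R_2^{0}=R_2^{1}$): the prover wins round~1 with probability $\frac12$, and given that win she knows $f_Q(C_1)=\hat h_1$ exactly, hence wins round~2 with conditional probability $1$, for an overall success of $\frac12$; your factorization --- a factor $\frac12$ from the ``trivial'' round times an ``untouched'' uniform $f_Q(C_1)$ --- predicts $\frac14$. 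The structural point you miss is that when $q_i=1$ the challenge guess embedded in a $\delta_i=1$ response is the very bit that updates the accumulator guess, so round wins and guesses are correlated by design. This coupling is exactly why the paper proves only the per-round reduction in this theorem and treats the joint behavior separately, in Theorem~\ref{theo_distance_f} and in the joint $(D_i,F_i)$ case analysis of Theorem~\ref{theo_distance} (Case~\ref{case1} there is precisely the event you tried to factor away). If you drop the decoupling and factorization claims and stop at the strategy equivalence, your argument is a correct proof of the stated theorem; the completion you sketch is unworkable as planned.
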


\begin{proof}

\noindent In order to send a response in advance at the $i$th round with 
probability of being correct greater that ${1}/{2}$, the adversary must send 
either $R_i^{0} \oplus f_Q({C}_{i-1}||0)$ or $R_i^{1} \oplus 
f_Q({C}_{i-1}||1)$. By definition, $f_Q({C}_{i-1}||0) = f({C}_{i-1})$ and 
$f_Q({C}_{i-1}||1) = f_Q({C}_{i-1}) \oplus q_i$. Therefore, $R_i^{0} \oplus 
f_Q({C}_{i-1}||0) = R_i^{0} \oplus f_Q({C}_{i-1})$ and $R_i^{1} \oplus 
f_Q({C}_{i-1}||1) = R_i^{1} \oplus f_Q({C}_{i-1}) \oplus q_i$. Since the 
adversary knows the values $R_i^{0}, R_i^{1}$, and $q_i$, guessing the correct 
value at this round is equivalent to guessing the correct value of 
$f_Q({C}_{i-1})$. 

\end{proof}

As stated in Theorem~\ref{theo_strategy_distance}, computing the adversary 
success probability requires the best behavior to guess the outputs sequence 
$f_Q(C_1),\dots,f_Q(C_{n-1})$. Theorem~\ref{theo_distance_f} solves this 
problem.

\begin{theorem}\label{theo_distance_f}
The best adversary's behavior to guess $f_Q(C_{i})$ is to assume that her 
previous guess for $f_Q(C_{i-1})$ is correct and to compute $f_Q(C_{i})$ as 
follows: (a) if $q_i = 0$, then consider $f_Q(C_{i}) =f_Q(C_{i-1})$. (b) if 
$q_i = 1$, pick a random bit $c_i$ and consider that $f_Q(C_{i}) =f_Q(C_{i-1}) 
\oplus c_i$.

\end{theorem}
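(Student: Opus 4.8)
The plan is to drive everything from the one-step recurrence satisfied by $f_Q$ and to split on the known bit $q_i$. By the definition of $f_Q$ we have $f_Q(C_i) = f_Q(C_{i-1}) \oplus (c_i \wedge q_i)$ for every $i$, with $f_Q$ of the empty sequence equal to $0$. I would stress at the outset what the attacker can and cannot use: in the early-reply strategy the malicious prover knows the whole register $Q$, hence every $q_i$, but she must commit her response for round $i$ before the verifier's challenge $c_i$ is revealed, so $c_i$ is, from her viewpoint, an unused uniform bit independent of everything she has seen. The recurrence then says that the value she must guess at round $i$ is the previous value perturbed by a single term whose behaviour is completely dictated by $q_i$.

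Next I would treat the two cases. If $q_i = 0$, the term $c_i \wedge q_i$ is identically $0$, so $f_Q(C_i) = f_Q(C_{i-1})$ as events, whatever the hidden $c_i$ turns out to be; hence carrying the previous guess unchanged, $f_Q(C_i) = f_Q(C_{i-1})$, is correct exactly when the previous guess was, giving rule (a). If $q_i = 1$, the recurrence reduces to $f_Q(C_i) = f_Q(C_{i-1}) \oplus c_i$, i.e. the new value is the old one flipped according to the still-unknown uniform challenge $c_i$. The adversary's only handle on $f_Q(C_i)$ is her guess for $f_Q(C_{i-1})$ together with a prediction of $c_i$; since $c_i$ is uniform and independent of her state, the best prediction is a fair coin, and the natural choice is the very bit $c_i$ she already has to bet on to form her round-$i$ response. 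Setting $f_Q(C_i) = f_Q(C_{i-1}) \oplus c_i$ with this random $c_i$ gives rule (b).

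The step I expect to be the main obstacle is arguing optimality rather than mere correctness of the rule, that is, ruling out a globally cleverer assignment of the guesses $f_Q(C_1),\dots,f_Q(C_{n-1})$. I would resolve it by isolating the genuine sources of randomness: the only quantities unknown to the adversary are the challenges $c_i$ at the active positions ($q_i = 1$), and these are mutually independent uniform bits. At a position with $q_i = 0$ the next value is a deterministic function of the previous one, so any guess inconsistent with the one carried forward can only lose probability, forcing rule (a) to be optimal; at a position with $q_i = 1$ the fresh bit is independent of everything observed, so no predictor beats a fair coin, and moreover reusing the challenge-bet rather than an independent coin couples the events \emph{win round $i$} and \emph{correctly extend the guess}, which cannot hurt and is what rule (b) prescribes. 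Applying the empty-sequence base value $f_Q$ of the empty sequence, namely $0$, to start the recursion and then composing these per-round optimal choices yields the claimed best behaviour for the entire sequence.
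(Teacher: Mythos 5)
Your proof is correct and follows essentially the same route as the paper's: a case split on $q_i$ using the one-step recurrence $f_Q(C_i) = f_Q(C_{i-1}) \oplus (c_i \wedge q_i)$, the observation that the unknown challenge $c_i$ is uniform and independent of the adversary's view so no predictor beats a fair coin when $q_i = 1$, and an induction anchored at $f_Q(\epsilon) = 0$. You are in fact somewhat more explicit than the paper on the optimality step (the independence of the active-position challenges and the coupling of the round-$i$ bet with the guess extension), where the paper simply asserts that no better behavior exists.
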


\begin{proof}

\noindent Assuming that $q_i = 0$, then $f_Q(C_{i}) = f_Q(C_{i-1})$ and thus, 
the probability to guess $f_Q(C_{i})$ is equal to the probability of guessing 
$f_Q(C_{i-1})$. In the case of $q_i = 1$, the adversary does not have a better 
behavior than choosing a random bit of challenge $c_i$ and considering that 
$f_Q(C_i) = f_Q(C_{i-1}) \oplus c_i$. Given that $f_Q(\epsilon) = 0$ where 
$\epsilon$ is the empty sequence, the proof can be straightforwardly concluded 
by induction. 

\end{proof}

\subsection{Adversary's success probability}

Given the best adversary's behavior provided by Theorems~\ref{theo_strategy_distance} and~\ref{theo_distance_f}, Theorem~\ref{theo_distance} shows a recursive way to compute the resistance to distance fraud.

\begin{theorem}\label{theo_distance}
Let $D_i$ be the event that the distance fraud adversary successfully passes 
the protocol until the $i$th round. Then, $\Pr(D_i)$ can be computed as follows:

\begin{equation*}
\Pr(D_i) = \frac{1}{4}\Pr(D_{i-1}) + \frac{1}{2^i} + \frac{1}{8}\sum_{j = 
1}^{i-1} \Pr(D_{j})\frac{1}{2^{i-j}}.	
\end{equation*}
where $\Pr(D_0) = 1$ is the stopping condition.
\end{theorem}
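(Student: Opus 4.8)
The plan is to reduce the analysis to tracking two coupled quantities across rounds and then to turn the resulting coupled linear recursion into the single recursion claimed. By Theorem~\ref{theo_strategy_distance} the adversary effectively commits, at round $i$, to a register bet $\hat c_i$ together with a guess of $f_Q(C_{i-1})$, and by Theorem~\ref{theo_distance_f} the running guess of $f_Q(C_i)$ is derived from that of $f_Q(C_{i-1})$ according to the value of $q_i$. I would therefore introduce the auxiliary event $G_i=$ ``the adversary's running guess of $f_Q(C_i)$ is correct'', with $G_0$ certain because $f_Q(\epsilon)=0$, and set $P_i=\Pr(D_i\cap G_i)$. The proof then consists of (i) expressing the per-round win probability in terms of $G_{i-1}$, (ii) obtaining a coupled recursion in $\Pr(D_i)$ and $P_i$, and (iii) unfolding it into the stated closed form.

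First I would compute the probability of winning round $i$ conditioned on the state reached after round $i-1$. Committing to register $\hat c_i$ splits into two cases: with probability $\tfrac12$ the bet matches the verifier's challenge ($\hat c_i=c_i$), in which case the committed response is correct iff the guess of $f_Q(C_{i-1})$ was correct, i.e.\ iff $G_{i-1}$ holds; with probability $\tfrac12$ the bet is wrong and, because $R_i^{0}$ and $R_i^{1}$ are independent, the committed bit matches with probability exactly $\tfrac12$. Hence the round is won with probability $\tfrac34$ when $G_{i-1}$ holds and $\tfrac14$ otherwise. Since the round-$i$ randomness (the challenge, the register bet, the fresh register bit and $q_i$) is independent of the past given $D_{i-1}$, this yields $\Pr(D_i)=\tfrac14\Pr(D_{i-1})+\tfrac12\,\Pr(D_{i-1}\cap G_{i-1})=\tfrac14\Pr(D_{i-1})+\tfrac12 P_{i-1}$.

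Next I would derive a recursion for $P_i$ by conditioning on $q_i$ and on whether the register bet matched. Using Theorem~\ref{theo_distance_f}, one has $G_i=G_{i-1}$ when $q_i=0$, whereas when $q_i=1$ the correctness of the guess flips exactly when the challenge bet is wrong; enumerating the four $(q_i,\text{match})$ combinations and tracking the joint outcome of ``round $i$ won'' and ``$G_i$ holds'' collapses, after weighting by $\tfrac12$ for $q_i$ and $\tfrac12$ for the match, to a recursion of the form $P_i=\tfrac12 P_{i-1}+\tfrac18\,\Pr(D_i)$, with $P_0=1$. This bookkeeping is the delicate part and I expect it to be the main obstacle: the match indicator at round $i$ simultaneously governs whether the round is won and whether $G_i$ holds, so $D_i$ and $G_i$ are genuinely correlated and must be enumerated jointly rather than multiplied.

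Finally I would eliminate $P$ to reach the claimed formula. Unfolding $P_i=\tfrac12 P_{i-1}+\tfrac18\Pr(D_i)$ geometrically gives $P_i=\tfrac{1}{2^{i}}+\tfrac18\sum_{j=1}^{i}\Pr(D_j)\,\tfrac{1}{2^{\,i-j}}$, where the $\tfrac{1}{2^{i}}$ term carries the boundary value $P_0=1$ (intuitively, the event that every challenge bet so far has been correct) and the sum is a renewal-type contribution indexed by the last round at which the guess was determined. Substituting the expression for $P_{i-1}$ into $\Pr(D_i)=\tfrac14\Pr(D_{i-1})+\tfrac12 P_{i-1}$ produces exactly $\Pr(D_i)=\tfrac14\Pr(D_{i-1})+\tfrac{1}{2^{i}}+\tfrac18\sum_{j=1}^{i-1}\Pr(D_j)\tfrac{1}{2^{\,i-j}}$, and checking $\Pr(D_1)=\tfrac34$ against $\Pr(D_0)=1$ fixes the stopping condition. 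Everything after Step~(ii) is routine linear manipulation once the coupled recursion is in hand.
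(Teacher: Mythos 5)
You follow the paper's proof skeleton exactly: your $G_i$ is the paper's $F_i$, your $P_i$ is the paper's $\Pr(D_{i-1},F_{i-1})$ shifted by one index, your first recursion is Equation~\ref{theo_distance_eq_2}, and your geometric unfolding plus back-substitution is the paper's closing step. The genuine gap sits precisely at the step you flag as ``the delicate part'': you assert that the joint $(q_i,\text{match})$ enumeration collapses to $P_i=\tfrac12 P_{i-1}+\tfrac18\Pr(D_i)$, but the enumeration you describe cannot produce a $\Pr(D_i)$ term at all. Every factor in a forward enumeration is a transition probability conditioned on the round-$(i-1)$ state, so its output is necessarily a linear combination of $P_{i-1}$ and $\Pr(D_{i-1})$. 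Carrying your plan out: given $(D_{i-1},G_{i-1})$, the four cells $(q_i,\text{match})$ contribute $1,\tfrac12,1,0$ to the joint event ``round $i$ won and $G_i$ holds'' (the cell $q_i=1$ with a wrong bet kills $G_i$), averaging to $\tfrac58$; given $(D_{i-1},\bar G_{i-1})$, only the cell $q_i=1$ with a wrong bet survives, contributing $\tfrac12\cdot\tfrac14=\tfrac18$. Hence
\begin{equation*}
P_i \;=\; \frac{5}{8}\,P_{i-1}\;+\;\frac{1}{8}\bigl(\Pr(D_{i-1})-P_{i-1}\bigr)\;=\;\frac{1}{2}\,P_{i-1}\;+\;\frac{1}{8}\,\Pr(D_{i-1}),
\end{equation*}
with $\Pr(D_{i-1})$ on the right-hand side, not $\Pr(D_i)$ as you wrote.

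This is not cosmetic: unfolding the recursion your enumeration actually yields gives $\Pr(D_i)=\tfrac14\Pr(D_{i-1})+\tfrac{1}{2^i}+\tfrac18\sum_{j=1}^{i-1}\Pr(D_{j-1})\,2^{-(i-j)}$, which first diverges from the theorem's formula at $i=2$: your own per-round transition probabilities give $P_1=\tfrac58$ and $\Pr(D_2)=\tfrac34 P_1+\tfrac14(\Pr(D_1)-P_1)=\tfrac12$, whereas the stated formula gives $\tfrac{31}{64}$. So step (ii) as written fails, and no amount of more careful bookkeeping of the same forward kind will repair it, because the forward chain (which can be verified by brute-force enumeration of all round-$1$ and round-$2$ randomness) simply does not produce the $\tfrac18\Pr(D_i)$ term. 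The paper reaches that term by a different route: a Bayes inversion of $\Pr(q_i=1\mid F_i,D_{i-1},F_{i-1})$ in Equations~\ref{theo_distance_eq_4} and~\ref{theo_distance_eq_6}. Note, however, that your elementary chain is in tension with that derivation --- a direct check of Equation~\ref{theo_distance_eq_4} gives $\Pr(q_i=1\mid\cdot)+\tfrac34\Pr(q_i=0\mid\cdot)=\tfrac34+\tfrac14\Pr(q_i=1\mid\cdot)$, not $\tfrac34-\tfrac14\Pr(q_i=1\mid\cdot)$, and Equation~\ref{theo_distance_eq_6} substitutes $\Pr(D_{i-1},\bar F_{i-1})=\Pr(D_i)-\Pr(D_{i-1},F_{i-1})$ where the minuend should be $\Pr(D_{i-1})$; correcting both reproduces exactly the recursion your enumeration gives. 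In short: your proposal, executed faithfully, proves a (correct) identity different from the one stated, and the asserted collapse to $\tfrac18\Pr(D_i)$ is the missing --- and unfixable --- step.
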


\begin{proof}

Let $F_i$ be the event that the adversary correctly guesses the value of 
$f_Q(C_i)$. Then, the event $D_i$ depends on the events $D_{i-1}$ and 
$F_{i-1}$, which can be expressed as follows:
\begin{align}\label{theo_distance_eq_1}
\Pr(D_i) =& \Pr(D_i| D_{i-1}, F_{i-1})\Pr(D_{i-1}, F_{i-1})\notag\\ &\eqspace+ 
\Pr(D_i| D_{i-1}, \bar{F}_{i-1})\Pr(D_{i-1}, \bar{F}_{i-1}). 
\end{align}

\noindent Two cases occur (a) $R_i^0 = R_i^1$ and (b) $R_i^0 \neq R_i^1$. In 
the first case, the adversary wins the $i$th round if and only if she guesses 
the value $f_Q(C_{i-1})$, so $\Pr(D_i| D_{i-1}, F_{i-1}, R_i^0 = R_i^1) = 1$ 
and $\Pr(D_i| D_{i-1}, \bar{F}_{i-1}, R_i^0 = R_i^1) = 0$. In the second case, 
the adversary has no better probability to win than ${1}/{2}$ and thus, 
$\Pr(D_i| D_{i-1}, F_{i-1}, R_i^0 \neq R_i^1) = \Pr(D_i| D_{i-1}, 
\bar{F}_{i-1}, R_i^0 \neq R_i^1) = {1}/{2}$. Therefore, we deduce $\Pr(D_i| 
D_{i-1}, F_{i-1}) = {3}/{4}$ and $\Pr(D_i| D_{i-1}, \bar{F}_{i-1}) = {1}/{4}$. 
Using these results and  Equation~\ref{theo_distance_eq_1} we have:
\begin{align}\label{theo_distance_eq_2}
\Pr(D_i) &=\frac{3}{4}\Pr(D_{i-1}, F_{i-1}) + \frac{1}{4}\Pr(D_{i-1}, 
\bar{F}_{i-1}) \notag \\
&= \frac{1}{4}\Pr(D_{i-1}) + \frac{1}{2}\Pr(D_{i-1}, F_{i-1}).
\end{align}

\noindent Equation~\ref{theo_distance_eq_2} states that $\Pr(D_i)$ can be 
computed by recursion if we express $\Pr(D_{i-1}, F_{i-1})$ in terms of the 
events $D_j$ where $1 \leq j < i$. Therefore, in the remaining of this proof we 
aim at looking for such result. As above, in order to analyze $\Pr(D_{i}, 
F_{i})$, the events $D_{i-1}$ and $F_{i-1}$ should be considered:

\begin{align}\label{theo_distance_eq_3}
\Pr(D_{i}, F_{i})  = & \Pr(D_{i}| F_{i}, D_{i-1}, F_{i-1})\Pr(F_{i}| 
D_{i-1}, F_{i-1})\Pr(D_{i-1}, F_{i-1})\notag 
\\
&+\Pr(D_{i}| F_{i}, D_{i-1}, \bar{F}_{i-1})\Pr(F_{i}| D_{i-1}, 
\bar{F}_{i-1})\Pr(D_{i-1}, \bar{F}_{i-1}).
\end{align}

\noindent Four cases should be analyzed depending on the value of $q_i$ and the 
events $F_i$ and $F_{i-1}$. 

\begin{case}[$q_i = 1$, $F_{i}$ and $F_{i-1}$ hold]\label{case1}
This case occurs if the adversary correctly guesses the challenge $c_i$. 
Therefore, she provides the correct answer at this round $R_i^{c_i} \oplus 
f_Q(C_i)$. So, $\Pr(D_{i}| F_{i}, D_{i-1}, F_{i-1}, q_i = 1) = 1$.
\end{case}

\begin{case}[$q_i = 1$, $F_{i}$ and $\bar{F}_{i-1}$ hold]\label{case2}
Given that $\bar{F}_{i-1}$ and $F_{i}$ hold, the adversary computed $f_Q(C_i) = 
f_Q(C_{i-1}) \oplus \tilde{c}_i$ using a challenge different from the 
verifier's one, \emph{i.e.,} $c_i \neq \tilde{c}_i$. Therefore, $\Pr(D_{i}| 
F_{i}, D_{i-1}, \bar{F}_{i-1}, q_i = 1) = \frac{1}{2}$ because both events 
$F_{i}$ and $\bar{F}_{i-1}$ coexist only if $q_i = 1$, then $\Pr(q_i = 1|F_{i}, 
D_{i-1}, \bar{F}_{i-1}) = 1$.
\end{case}

\begin{case}[$q_i = 0$, $F_{i}$ and $F_{i-1}$ hold]\label{case3}
Given $q_i = 0$, the event $F_i$ has no effect on the event $D_i$. Thus, 
$\Pr(D_{i}| F_{i}, D_{i-1}, F_{i-1}, q_i = 0) = \Pr(D_{i}| D_{i-1}, F_{i-1}, 
q_i = 0)=\frac{3}{4}$ because it depends on whether $R_i^0 = R_i^1$. So, 
$\Pr(D_{i}| F_{i}, D_{i-1}, F_{i-1}, q_i = 0) = \frac{3}{4}$.
\end{case}

\begin{case}[$q_i = 0$, $F_{i}$ and $\bar{F}_{i-1}$ hold]\label{case4}
When $q_i = 0$, then $f_Q(C_i) = f_Q(C_{i-1})$, which means that this case 
cannot occur. Therefore, $\Pr(F_{i}, D_{i-1}, \bar{F}_{i-1}, q_i = 0) = 0$.
\end{case}


Cases~\ref{case1} and~\ref{case3} yield  the following result: 

\begin{align}\label{theo_distance_eq_4}
\Pr(D_{i}| F_{i}, D_{i-1}, F_{i-1})= & \Pr(q_i = 1|F_{i}, D_{i-1}, F_{i-1}) 
+ \frac{3}{4}\Pr(q_i = 0|F_{i}, D_{i-1}, F_{i-1}) \notag\\
= &\frac{3}{4} - \frac{1}{4}\Pr(q_i = 1|F_{i}, D_{i-1}, F_{i-1}).
\end{align}

And Cases~\ref{case2} and~\ref{case4} yield this other result: 
\begin{equation}\label{theo_distance_eq_5}
\Pr(D_{i}| F_{i}, D_{i-1}, \bar{F}_{i-1}) = \frac{1}{2}.
\end{equation}

\noindent Because $\Pr(F_{i}| F_{i-1}, q_i = 0) = 1$ and $\Pr(F_{i}| F_{i-1}, 
q_i = 1) = {1}/{2}$, we have $\Pr(F_{i}| D_{i-1}, F_{i-1}) = \Pr(F_{i}| 
F_{i-1}) = {3}/{4}$. Similarly, $\Pr(F_{i}| D_{i-1}, \bar{F}_{i-1}) = 
\Pr(F_{i}| \bar{F}_{i-1}) = {1}/{4}$ because $\Pr(F_{i}| \bar{F}_{i-1}, q_i = 
0) = 0$ and $\Pr(F_{i}| \bar{F}_{i-1}, q_i = 1) = {1}/{2}$. Combining these 
results with Equations~\ref{theo_distance_eq_4} and~\ref{theo_distance_eq_5}, 
Equation~\ref{theo_distance_eq_3} becomes:
\begin{align}\label{theo_distance_eq_6}
\Pr(D_{i}, F_{i})  = & \left(\frac{3}{4} - \frac{1}{4}\Pr(q_i = 1|F_{i}, 
D_{i-1}, F_{i-1})\right)\frac{3}{4}\Pr(D_{i-1}, 
F_{i-1})\notag\\
& +\frac{1}{2}\frac{1}{4}\Pr(D_{i-1}, \bar{F}_{i-1})\notag \\
= &  \frac{3}{16}\Pr(q_i = 1|F_{i}, D_{i-1}, F_{i-1})\Pr(D_{i-1}, F_{i-1})+ 
\frac{9}{16}\Pr(D_{i-1}, F_{i-1})\notag \\
&+\frac{1}{8}\Pr(D_{i-1}, \bar{F}_{i-1})\notag\\
 = & \frac{3}{16}\frac{\Pr(q_i = 1, F_{i}, D_{i-1}, F_{i-1})}{\Pr(F_{i}| 
D_{i-1}, F_{i-1})}+\frac{9}{16}\Pr(D_{i-1}, F_{i-1})+\frac{1}{8}\Pr(D_{i-1}, 
\bar{F}_{i-1})\notag\\
 = & \frac{3}{16}\frac{\Pr(F_{i}|q_i = 1, D_{i-1}, F_{i-1})\Pr(D_{i-1}, 
F_{i-1})\frac{1}{2}}{\Pr(F_{i}| D_{i-1}, F_{i-1})}+\frac{9}{16}\Pr(D_{i-1}, 
F_{i-1})\notag \\
& + \frac{1}{8}\Pr(D_{i-1}, \bar{F}_{i-1})\notag\\
 = &\frac{3}{16}\frac{\frac{1}{2}\Pr(D_{i-1}, 
F_{i-1})\frac{1}{2}}{\frac{3}{4}}+ \frac{9}{16}\Pr(D_{i-1}, 
F_{i-1})+\frac{1}{8}\Pr(D_{i-1}, \bar{F}_{i-1})\notag\\
 = & \frac{1}{16}\Pr(D_{i-1}, F_{i-1})+\frac{9}{16}\Pr(D_{i-1}, 
F_{i-1})+\frac{1}{8}\Pr(D_{i-1}, \bar{F}_{i-1})\notag\\
 = & \frac{5}{8}\Pr(D_{i-1}, F_{i-1}) + \frac{1}{8}(\Pr(D_i) - \Pr(D_{i-1}, 
F_{i-1}))\notag\\
 = &\frac{1}{2}\Pr(D_{i-1}, F_{i-1}) + \frac{1}{8}\Pr(D_i)\notag\\
 = & \frac{1}{2^i} + \frac{1}{8}\sum_{j = 1}^{i} \Pr(D_{j})\frac{1}{2^{i-j}}.
\end{align}

\noindent Considering that $\Pr(D_0) = 1$, Equations~\ref{theo_distance_eq_6} 
and~\ref{theo_distance_eq_2} yield the expected result. 

\end{proof}

\section{Noise resilience}\label{section:noise}

Some efforts have been made in order to adapt existing distance-bounding 
protocols to noisy channels. Most of them rely on using a threshold $x$ 
representing the maximum number of incorrect responses expected by the 
verifier~\cite{Hancke:2005:RDB:1128018.1128472,KimA-2011-ieeetwc}.
 Intuitively, the larger $x$, the lower the false rejection ratio but also the 
lower the resistance to mafia and distance frauds. Others use an error 
correction code during an extra slow 
phase~\cite{Singelee:2007:DBN:1784404.1784415}. However, the latter cannot be 
applied to our protocol given that it does not contain any final slow phase. 
Consequently, 
we focus on the threshold technique. 

\subsection{Understanding the noise effect in our protocol}

We consider in the analysis that a 1-bit challenge (on the \emph{forward} 
channel) can be flipped due to noise with probability $p_f$ and a 1-bit answer 
(on the \emph{backward} channel) can be flipped with probability $p_b$.  
Further, we denote as $\tilde{c}_i$ the bit-challenge received by the prover at 
the $i$th round, which might be obviously different to the challenge $c_i$. 
Similarly, $\tilde{r}_i$ denotes the response received by the verifier at the 
$i$th round. As in previous works~\cite{Hancke:2005:RDB:1128018.1128472}, the 
considered forward and backward channels are assumed to be memoryless. 
Table~\ref{table_noise} shows the three different scenarios when considering a 
noisy communication at the $i$-th round in our protocol.

\begin{table*}
\centering
 {\renewcommand\normalsize{\footnotesize}%
  \normalsize
\begin{tabular}{|l|l|l|l|} \hline
\textbf{Forward Noise} & \textbf{Backward Noise} & \textbf{Forward and}\\
&&\textbf{Backward Noise} \\
\hline
$P$ receives $\tilde{c}_i = c_i \oplus 1$ & $P$ receives $\tilde{c}_i = c_i$  & 
$P$ receives $\tilde{c}_i = c_i \oplus 1$ \\
$P$ updates $\tilde{C}_i = \tilde{c}_1...\tilde{c}_i$&$P$ updates $\tilde{C}_i 
= \tilde{c}_1...\tilde{c}_i$& $P$ updates $\tilde{C}_i = 
\tilde{c}_1...\tilde{c}_i$\\
$P$ sends $r_i = R_i^{\tilde{c}_i} \oplus f_Q(\tilde{C}_i)$&$P$ sends 
$R_i^{c_i} \oplus f_Q(\tilde{C}_i)$& $P$ sends $R_i^{\tilde{c}_i} \oplus 
f_Q(\tilde{C}_i)$\\
$V$ receives $\tilde{r}_i = r_i$&$V$ receives $\tilde{r}_i = r_i \oplus 1$& $V$ 
receives $\tilde{r}_i = r_i \oplus 1$\\\hline
\end{tabular}}
\vspace{10pt}
\caption{The three possible scenarios when some noise occurs at the $i$th 
round.}
\label{table_noise}
\end{table*}

According to the protocol, in a noise-free $i$th round 
executed with a legitimate prover it holds that $r_i = \tilde{r}_i 
\Leftrightarrow f_Q(C_{i}) = f_Q(\tilde{C}_{i})$. We thus say that prover and 
verifier are \emph{synchronized} at the $i$th round if $f_Q(C_{i}) = 
f_Q(\tilde{C}_{i})$, otherwise they are said to be \emph{desynchronized}. 
Intuitively, in a noise-free $i$th round the answer $\tilde{r}_i$ can be 
considered correct by the verifier either if $r_i = \tilde{r}_i$ and they are 
synchronized or if $r_i \neq \tilde{r}_i$ and they are desynchronized. 

The challenge is therefore to identify whether the prover and the verifier are 
synchronized or not. To that aim, we rise the following observation.

\begin{observation}\label{obs_sync}
Several consecutive rounds where all, or almost all, the answers hold that $r_i 
= \tilde{r}_i$ (resp. $r_i \neq \tilde{r}_i$), might indicate that the 
legitimate prover and the verifier have been synchronized (resp. 
desynchronized). 
\end{observation}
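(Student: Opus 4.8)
The plan is to recast this heuristic as a precise statement about a hidden, slowly-varying \emph{synchronization} state, and then to argue, using the memorylessness and independence of the two channels, that a long run of equal comparisons $r_i = \tilde{r}_i$ (resp.\ $r_i \neq \tilde{r}_i$) is far more likely to be produced by a constant synchronization state than by an oscillating one. First I would make that state explicit. Writing $a_j \in \{0,1\}$ for the forward-noise indicator at round $j$ (so that $\tilde{c}_j = c_j \oplus a_j$ with $\Pr(a_j=1)=p_f$), distributivity of $\wedge$ over $\oplus$ in the definition of $f_Q$ gives
\[
f_Q(C_i) \oplus f_Q(\tilde{C}_i) = \bigoplus_{j=1}^{i}\big((c_j \oplus \tilde{c}_j)\wedge q_j\big) = \bigoplus_{j=1}^{i}(a_j \wedge q_j).
\]
Hence the desynchronization bit $\delta_i := f_Q(C_i)\oplus f_Q(\tilde{C}_i)$ obeys $\delta_i = \delta_{i-1}\oplus(a_i\wedge q_i)$, so it flips only on a round where both $a_i=1$ and $q_i=1$. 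Since $q_i$ is uniform and independent of the channel noise, this happens with probability $p_f/2$ per round, and $\delta$ stays constant over a block of $k$ rounds with probability $(1-p_f/2)^k$. This stickiness of the synchronization state is the structural fact on which the whole argument rests.

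Next I would connect the observable comparison to $\delta_i$. Let $b_i$ be the backward-noise indicator ($\Pr(b_i=1)=p_b$) and let $v_i$ be the indicator that the verifier's locally recomputed answer differs from the received $\tilde{r}_i$, i.e.\ that $r_i \neq \tilde{r}_i$. Expanding the responses as in Table~\ref{table_noise} yields
\[
v_i = \big(R_i^{c_i}\oplus R_i^{\tilde{c}_i}\big)\oplus \delta_i \oplus b_i.
\]
When $a_i=0$ the first term vanishes and $v_i=\delta_i\oplus b_i$, so the comparison reports the true synchronization state except on the rare backward-noise rounds; when $a_i=1$ the term equals the independent uniform bit $R_i^0\oplus R_i^1$ and the comparison is uninformative, but such rounds are themselves rare. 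Thus $v_i$ equals the current value of $\delta$ on all but an expected fraction at most $p_b + p_f/2$ of the rounds.

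Finally I would combine the two facts. Over a window of $k$ consecutive rounds on which $\delta$ does not change (probability $(1-p_f/2)^k$), the bits $v_i$ all equal the common value of $\delta$ except on the corrupted rounds; a Chernoff bound then shows that, whenever $p_b+p_f/2<\tfrac12$, the observed fraction of agreements exceeds $\tfrac12$ with probability $1-e^{-\Omega(k)}$. An observed run of all --- or, allowing for backward noise, almost all --- equal comparisons therefore both arises precisely when $\delta$ is constant over the window and reveals its value: a run of $v_i=0$ signals synchronization, a run of $v_i=1$ signals desynchronization. Read as a posterior, the probability that the hidden state oscillated yet produced a near-constant observed pattern is a product of the rare events $\{a_j=1,\,q_j=1\}$ together with compensating backward flips, hence negligible; this is exactly what ``might indicate'' is meant to quantify.

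The step I expect to be the main obstacle is handling the forward-noise rounds, since they simultaneously flip the hidden synchronization state and corrupt the observation: I must bound the probability that a genuine state change is masked by conspiring backward flips so that the observed pattern stays constant. The remaining difficulty is purely one of formalization --- pinning down exactly what ``almost all'' and ``might indicate'' should mean (a fixed threshold, a majority vote over a sliding window, or an explicit posterior bound) --- because the statement is deliberately phrased as a qualitative observation rather than a sharp inequality.
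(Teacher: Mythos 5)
Your proposal is correct in substance, but be aware that the paper offers no proof of this statement at all: it is deliberately labelled an \emph{Observation} and serves only as a heuristic motivating the \emph{SwitchedRounds} procedure (Algorithm~\ref{alg_des}). What you have done is supply the missing formal backbone, and your route is sound. The identity $f_Q(C_i) \oplus f_Q(\tilde{C}_i) = \bigoplus_{j=1}^{i}(a_j \wedge q_j)$ follows from distributivity of $\wedge$ over $\oplus$ exactly as you state, and it yields both facts the paper uses only informally: the hidden synchronization bit $\delta_i$ is sticky, flipping with probability $p_f/2$ per round since $q_i$ is uniform and independent of the forward noise, and a switch can occur only at a round with $q_r = 1$ --- which is precisely the unproven justification the paper gives for Step~\ref{step_index_r} of Algorithm~\ref{alg_des}. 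Your expansion $v_i = (R_i^{c_i}\oplus R_i^{\tilde{c}_i}) \oplus \delta_i \oplus b_i$ matches the quantity $d_i = \tilde{r}_i \oplus R_i^{c_i} \oplus f_Q(C_i)$ computed by the verifier and correctly subsumes the three cases of Table~\ref{table_noise}: the per-round observation error rate is $(1-p_f)p_b + p_f/2 \leq p_b + p_f/2$, independent across rounds under the paper's memoryless-channel assumption. So your hidden-sticky-state plus noisy-observation decomposition is a genuine strengthening of the paper's prose, not a restatement of an existing proof.

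Two caveats before you call this a theorem. First, the uniformity and round-to-round independence of $R_i^0 \oplus R_i^1$ (needed both for the ``uninformative'' claim on forward-noise rounds and for the Chernoff step) rest on idealizing the $PRF$ output as uniform; the paper's mafia- and distance-fraud analyses make the same idealization, but you should state it as a hypothesis. Second, your posterior step needs the refinement you half-anticipate at the end: a flip of $\delta$ within a constant number of rounds of the window boundary can be masked by a short run of backward flips at cost only $p_b^{O(1)}$, which is not negligible. The correct extractable statement is therefore that a long near-constant run of $v_i$ certifies, with probability $1 - e^{-\Omega(k)}$, that $\delta$ is constant on all but a short prefix and suffix of the window --- i.e., the switch round is localized only approximately. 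This is consistent with, and explains, the paper's own hedging (``we cannot precisely determine whether the switch occurred in the $(i+1)$th round or in some (possibly close) round'') and the deliberately qualitative ``might indicate'' in Observation~\ref{obs_sync}; once stated with that boundary slack, your plan goes through.
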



\subsection{Our noise resilient mechanism}\label{sec:noise}

Based on Observation~\ref{obs_sync}, we propose an heuristic aimed at 
identifying those rounds where prover and verifier \emph{switch} from being 
synchronized to desynchronized or vice versa. The heuristic is named 
\emph{SwitchedRounds} and its pseudocode description is provided by 
Algorithm~\ref{alg_des}.

\emph{SwitchedRounds} creates first the sequence $d_1...d_n$ where $d_i = 0$ 
if $r_i = \tilde{r}_i$, otherwise $d_i = 1$. Following 
Observation~\ref{obs_sync}, it searches for the longest subsequence 
$d_i...d_{j}$ that matches any of the following patterns\footnote{The patterns 
have been written 
following the POSIX Extended Regular Expressions standard. The symbols $\wedge$ 
and $\$$ represent the start and the 
end of the string respectively.}: (a) $\wedge(1+)0$ (b) $\wedge(1+)\$$ (c) 
$1(0+)1$ (d) $0(1+)0$ (e) $1(0+)\$$ (f) $0(1+)\$$.

The aim of these patterns is to look for \emph{large} subsequences of either 
consecutive $0$s or $1$s in $d_1 \dots d_n$. Note that, we do not include the 
patterns $\wedge(0+)1$ and $\wedge(0+)\$$ because starting with a sequence of 
zeros is exactly what the verifier expects. Intuitively, the lower the 
expected noise the larger 
the subsequences should be. As an example, let us consider the case where the 
communication channel is noiseless. Since no noise is expected, the sequence 
$d_1 \dots d_n$ should be equal to $n$ consecutive zeros unless 
an attack is being performed. In Algorithm~\ref{alg_des}, a threshold $\Delta 
l$ defines how large a matching 
$d_i...d_{j}$  should be in order to be analyzed. We discuss a computational 
approach to estimate a practical value for $\Delta 
l$ in Section~\ref{section:result}.

If a pattern $d_i...d_{j}$ holds that $j - i \geq 
\Delta l$, \emph{SwitchedRounds} looks for the closer index $r$ to $i+1$ such 
that $q_r = 1$, and assumes that the $r$th round caused the switch from  
synchronization to desynchronization or vice versa. To understand this choice, 
let us note that a pattern $d_i...d_{j}$ implies that $d_i \neq d_{i+1}$. 
This could have happened if a switch from synchronization to desynchronization 
or vice versa occurred in the $(i+1)$th round. However, due to the 
probabilistic nature of the noise we cannot precisely determine whether the 
switch occurred in the 
$(i+1)$th round or in some (possibly close) round. What we do know is that such 
a round $r$ must hold that $q_r = 1$, which justifies Step~\ref{step_index_r} 
in 
Algorithm~\ref{alg_des}.

Once $r$ is found, the pair 
$(r,s)$  is created where $s$ is $0$ if the switch is to synchronization, $s = 
1$ otherwise. Finally, \emph{SwitchedRounds} recursively calls itself to 
analyze the two subsequences lying on the left and right side of 
$d_i...d_{j}$. The output is the union of all obtained pairs such that they 
are in increasing order (according to the round) and every two consecutive 
pairs have different values (according to the type of switching). 

\begin{algorithm}[htb]
\caption{SwitchedRounds} \label{alg_des}
\begin{algorithmic}[1]
\Require The challenges $c_1...c_n$ and the registers $R^0$, $R^1$, and $Q$. 
The prover's responses $\tilde{r}_1...\tilde{r}_n$. A threshold $\Delta l$ 
indicating the minimum matching length.
\State Let $d_1...d_n$ be a sequence such that $d_i = \tilde{r}_i \oplus 
R_i^{c_i}\oplus f(C_i)$.
\State Let $d_i...d_{j}$ be the longest matching with $\wedge(1+)0 | 
\wedge(1+)\$ | 0(1+)0 | 0(1+)\$ | 1(0+)1 | 1(0+)\$$ on $d_{1}...d_{n}$.
\If{$j - i < \Delta l$ or no matching exists}
	\Return the empty set;
\EndIf
\State Let $s$ be $0$ if the matching is with $1(0+)1 | 1(0+)\$$ and $1$ 
otherwise;
\State Let $r$ be the closest index to $i+1$ such that $q_r = 
1$;\label{step_index_r}
\State Let $A$ be the output of \emph{SwitchedRounds} on $d_1...d_{i-1}$;
\State Let $B$ be the output of \emph{SwitchedRounds} on $d_{j+1}...d_{n}$;
\State Let $E$ be the union of $A \cup \{(r, s)\} \cup B$ such that the indexes 
are in increasing order and every two consecutive pairs have different boolean 
values;
\State \Return $E$;
\end{algorithmic}
\end{algorithm}

Armed with the \emph{SwitchedRounds} algorithm, the threshold technique can be 
straightforwardly applied as Algorithm~\ref{alg_authentication} shows. It 
simply counts the number of errors occurred during the execution of the 
protocol where an error is defined as either a switched round or a wrong 
response. Both cases are considered as \emph{error} because, on the one hand,  
a switched round might be falsely detected during an attack,  and on the other 
hand, there is no distinction between a wrong response due to noise or to an 
attack. Finally, the protocol is considered to fail if the number of errors is 
above a threshold $x$.

Note that basing the decision on a threshold is a common and easy
procedure but not the best one, especially when the channels are not
memoryless. Instead, the decision procedure could consist in comparing
the vector $d_1...d_n$ with the error distribution on the noisy channels.

\begin{algorithm}
\caption{Authentication in the presence of noise} \label{alg_authentication}
\begin{algorithmic}[1]
\Require All the parameters of the protocol; an integer value $x$ representing 
the noise tolerance; and a threshold $\Delta l$.
\State Let $E$ be the output of \emph{SwitchedRounds} algorithm on input 
$c_1...c_n$, $\tilde{r}_1...\tilde{r}_n$, $R^0$, $R^1$, $Q$, and $\Delta l$;
\State Let $d_1...d_n$ be a sequence such that $d_i = \tilde{r}_i \oplus 
R_i^{c_i}\oplus f(C_i)$.
\State Let $s$ be a boolean variable initialized in $0$;
\State Let $errors$ be a counter initialized in $0$;
\ForAll{$1 \leq i \leq n$}
	\If{$(i, s') \in E$} 
		\State $s \leftarrow s'$; 
		\State $errors++$;
	\ElsIf{($d_i = 0$ and $s = 1$) or ($d_i = 1$ and $s = 0$)} 
		\State $errors++$;
	\EndIf
\EndFor
\If{$errors > x$}
	\Return fail;
\Else \ 
	\Return success;
\EndIf
\end{algorithmic}
\end{algorithm}

\section{Experiments and comparison}\label{section:result}

The first part of this section is devoted to compare several DB protocols in 
term of mafia fraud resistance, distance fraud resistance, and memory 
consumption. The second part takes noise into account and evaluates our 
proposal w.r.t. the Hancke and Kuhn's~\cite{Hancke:2005:RDB:1128018.1128472} 
and Kim and Avoine's~\cite{KimA-2011-ieeetwc} protocols.

\subsection{Noise-free environment}

Mafia and distance fraud analyses in a noise-free environment can be found 
in~\cite{Hancke:2005:RDB:1128018.1128472,KimA-2011-ieeetwc,Trujillo-Rasua:2010:PDP:1926325.1926352,AvoineT-2009-isc}
 for KA2, AT, Poulidor, and HK. In the case of AT and Poulidor, only an 
upper-bound of their resistance to distance fraud have been 
provided~\cite{Trujillo-Rasua:2010:PDP:1926325.1926352,T2013}. Considering 
those previous results, Fig.~\ref{fig:mafia} and Fig.~\ref{fig:distance} show 
the resistance to mafia and distance frauds respectively for the five 
considered 
protocols in a single chart. For each of them, the configuration that maximizes 
its security has been chosen: this is particularly important for AT and KA2 
because different configurations can be used. 

\begin{figure*}
\centering
  \subfigure[Mafia Fraud]{\hspace{-5pt}
       \includegraphics[width=0.8\textwidth, angle=270]{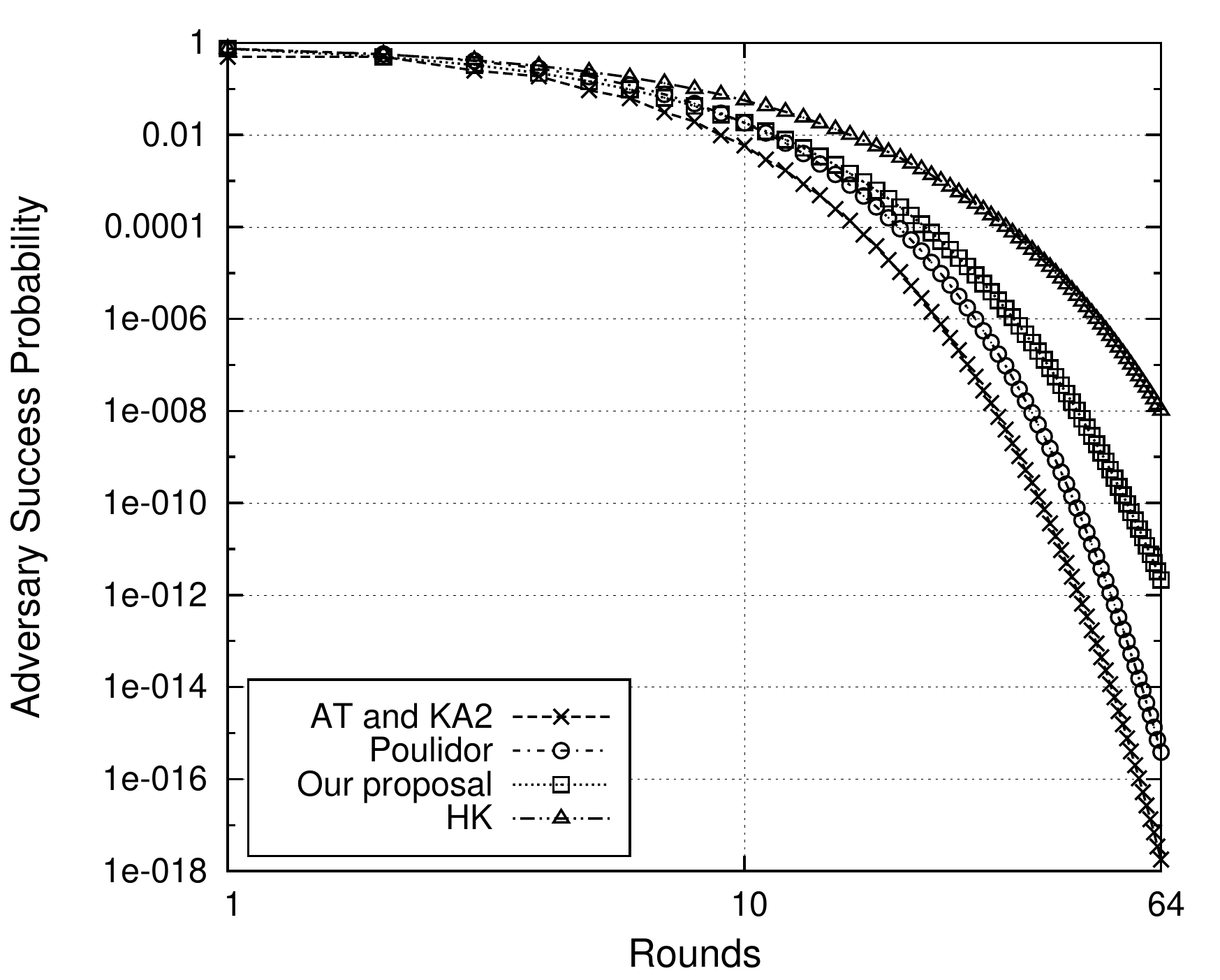}
  	\label{fig:mafia}
  }
  \subfigure[Distance Fraud]{\hspace{-15pt}
       \includegraphics[width=0.8\textwidth, angle=270]{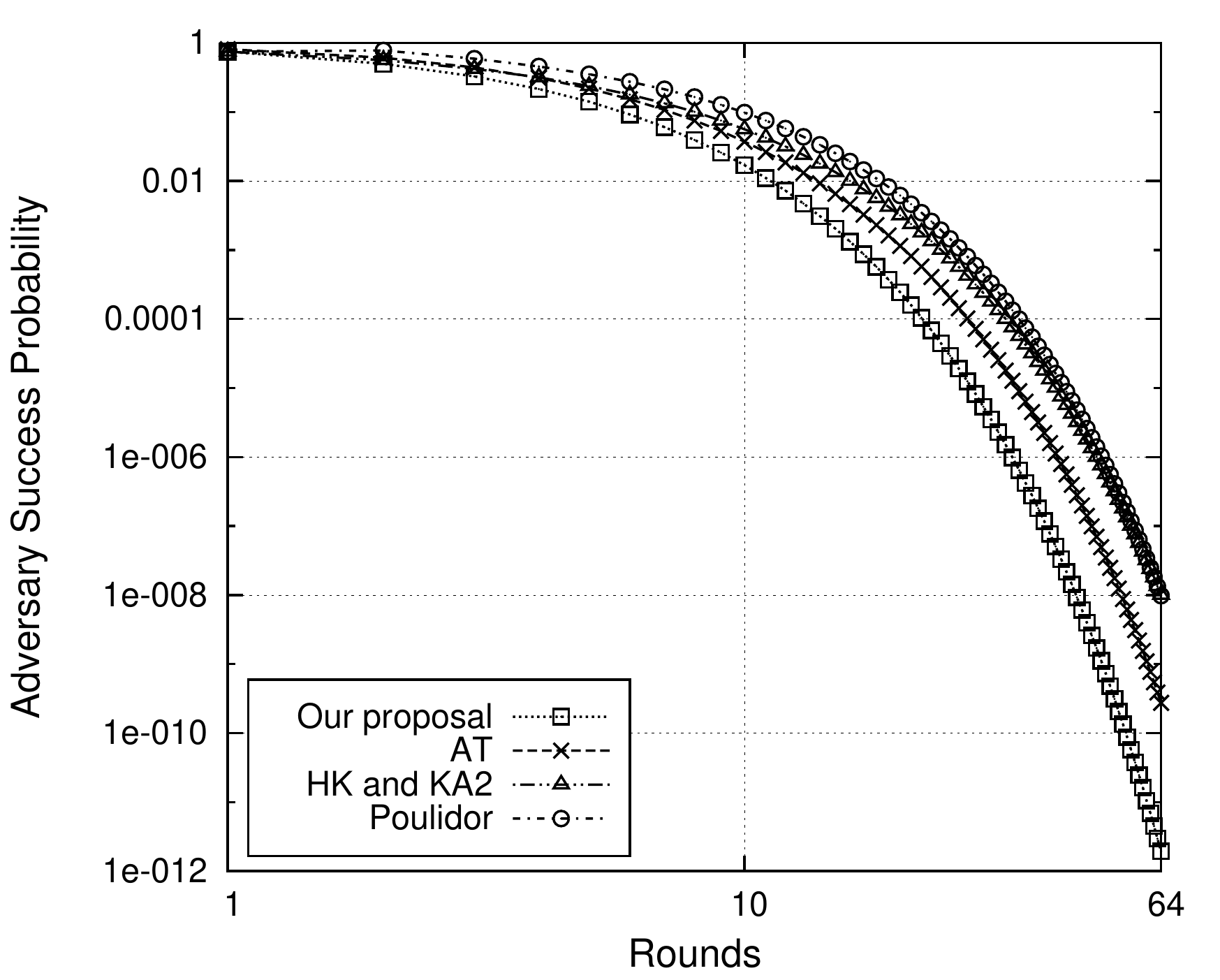}
  	\label{fig:distance}
  }
  \caption{The mafia fraud (Figure~\ref{fig:mafia}) and distance fraud 
  (Figure~\ref{fig:distance}) success probabilities considering up to $64$ 
  rounds (logarithmic scale). The considered protocols are 
  KA2~\cite{KimA-2011-ieeetwc}, AT~\cite{AvoineT-2009-isc}, 
  Poulidor~\cite{Trujillo-Rasua:2010:PDP:1926325.1926352}, 
  HK~\cite{Hancke:2005:RDB:1128018.1128472}, and our protocol.}
\end{figure*}

Figures~\ref{fig:mafia} and~\ref{fig:distance} show that AT and KA2 are the best protocols in terms of mafia fraud while our proposal is the best in terms of distance fraud. However, it makes sense to consider the two properties together. To do so, we follow the technique used in~\cite{Trujillo-Rasua:2010:PDP:1926325.1926352} to seek for a good trade-off. This technique first discretizes the mafia fraud ($x$) and distance fraud ($y$) success probabilities. For every pair $(x,y)$, it then evaluates which protocol is the less round-consuming. This protocol is considered as the “best” for the considered pair. In case of draw between two protocols, the one that is the less memory-consuming is considered as the best protocol. Using this idea, it is possible to draw what we call a \emph{trade-off} chart, which represents for every pair $(x,y)$ the best protocol among the five considered (see Figure~\ref{fig:trade-off1}).

Figure~\ref{fig:trade-off1} shows that our protocol offers a good trade-off between resistance to mafia fraud and resistance to distance fraud, especially when a high security level against distance fraud is expected. In other words, our protocol defeats all the other considered protocols, except when the expected security levels for mafia and distance fraud are unbalanced, which is meaningless in common scenarios.

Another interesting comparison could take into consideration the memory consumption of the protocols. Indeed, for $n$ rounds of the fast phase, AT requires $2^{n+1}-1$ bits of memory, which is prohibitive for most pervasive devices. We can therefore compare protocols that require a linear memory w.r.t. the number of rounds $n$. For that, we consider a variant of AT~\cite{AvoineT-2009-isc}, denoted AT-3, that uses $n/3$ trees of depth $3$ instead of just one tree of depth $n$. By doing so, the memory consumption of all the considered protocols is below $5n$ bits of memory. The resulting trade-off chart (Figure~\ref{fig:trade-off2}) shows that constraining the memory consumption considerable reduces the area where AT is the best protocol, but it also shows that our protocol is also the best trade-off in this scenario.

\begin{figure*}
\centering
  \subfigure[Trade-off without memory constraint]{
       \includegraphics[width=0.8\textwidth, angle=270]{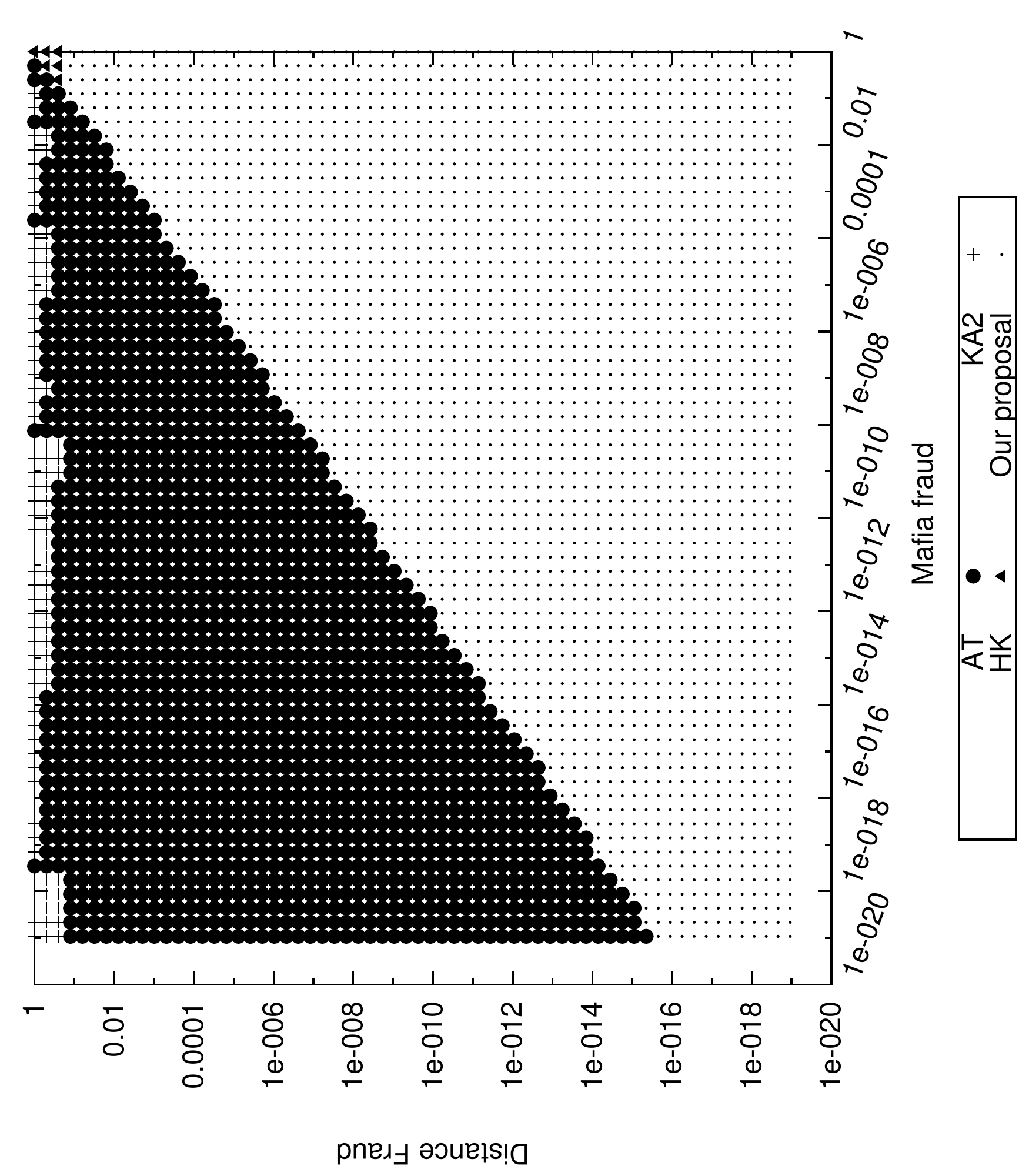}
  	\label{fig:trade-off1}
  }
  \subfigure[Trade-off with memory constraint]{
       \includegraphics[width=0.8\textwidth, angle=270]{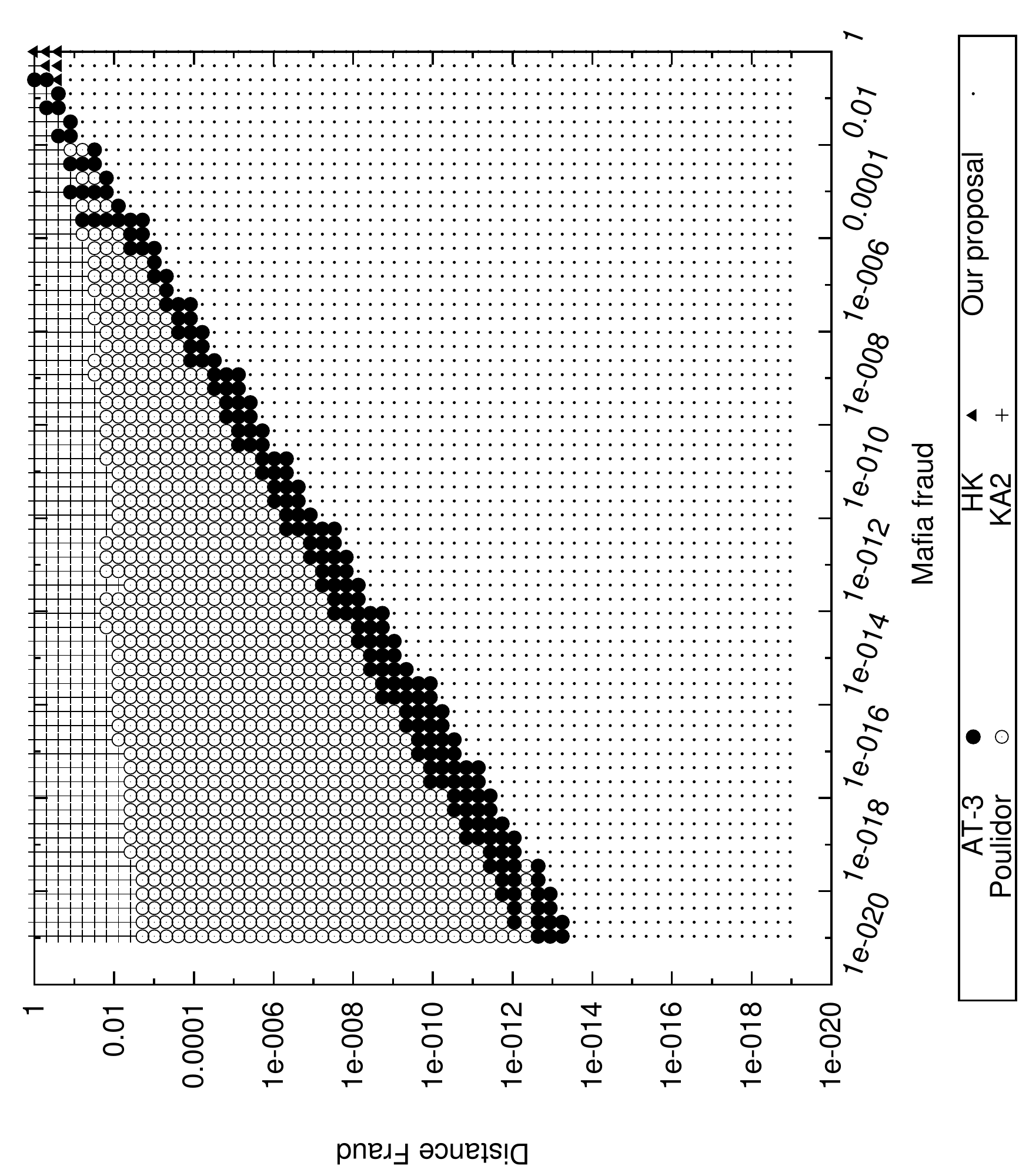}
  	\label{fig:trade-off2}
  }
  \caption{Two trade-off charts showing the most efficient protocol for each 
  pair of mafia fraud and distance fraud probability values ranging between $1$ 
  and $\left(\frac{1}{2}\right)^{64}$. Figure~\ref{fig:trade-off1} considers 
  the protocols KA2~\cite{KimA-2011-ieeetwc}, AT~\cite{AvoineT-2009-isc}, 
  Poulidor~\cite{Trujillo-Rasua:2010:PDP:1926325.1926352}, 
  HK~\cite{Hancke:2005:RDB:1128018.1128472}, and our proposal, while 
  Figure~\ref{fig:trade-off2} changes AT by its low-resource consuming variant 
  AT-3.}
\end{figure*}

\subsection{Noisy environment}

Among the protocols we are considering, only 
HK~\cite{Hancke:2005:RDB:1128018.1128472} and KA2~\cite{KimA-2011-ieeetwc} 
claim to be noise resilient. For this reason, we analyze in this section the 
performance of our proposal in the presence of noise by comparing it with HK 
and KA2. The comparison is performed by considering two properties: 
availability and security. Availability is measured in terms of false 
rejection ratio and security in terms of mafia fraud resistance. It should be 
remarked that, distance fraud resistance has not been considered for 
simplicity and because, as shown previously, our proposal outperforms all the 
other protocols in terms of this fraud. 

An important parameter when measuring availability and security for the three 
protocols is the number of allowed incorrect responses ($x$). In the case of 
KA2 and our proposal, other parameters are the minimum size of the pattern 
($s$) (denoted by $\Delta l$ in Algorithm~\ref{alg_des}) and the number of 
predefined challenges $p$ respectively. Theoretical bounds for these parameters 
in term of the number of rounds and noise probabilities might be provided, 
however, we left this non-trivial task for future work. Instead, we treat the 
three parameters $z = (x, s, p)$ as the 
variables of an optimization problem defined as follows:

\begin{problem}\label{problem}
Let $\Pi$ be a distance-bounding protocol, $n$ the number of rounds, $p_f$ the 
probability of noise in the forward channel, and $p_b$ the probability of noise 
in the backward channel. Let $\Pi_{p_f, p_b, n}^{security}(z)$ and $\Pi_{p_f, 
p_b, n}^{availability}(z)$ be the functions that, given a set of parameters $z 
= (x, s, p)$, compute the adversary success probability when mounting a mafia 
fraud against $\Pi$ and the false rejection ration of $\Pi$ respectively. Given 
a threshold $\Delta$, the optimization problem consists in:
\begin{equation*}
\begin{array}{l l}
\text{minimizing}\quad  & \quad\Pi_{p_f, p_b, n}^{security}(z) \\
\text{subject to}\quad & \quad\Pi_{p_f, p_b, n}^{availability}(z) \leq \Delta
\end{array}
\end{equation*}
\end{problem}

To follow notations of Problem~\ref{problem}, we assume that $\Pi \in 
\{\text{HK, 
KA2, Our}\}$ where ``Our'' denotes our proposal. Therefore, algorithms to 
compute 
$\Pi_{p_f, p_b, n}^{security}(z)$ and $\Pi_{p_f, p_b, n}^{availability}(z)$ 
for every $\Pi \in \{\text{HK, KA2, Our}\}$ are required. In case $\Pi = 
\text{HK}$, 
$\text{HK}_{p_f, p_b, n}^{security}(z)$ can be computed as shown 
in~\cite{Hancke:2005:RDB:1128018.1128472}, while $\text{HK}_{p_f, p_b, 
n}^{availability}(z)$ is provided by Theorem~\ref{last_theo1} (see 
Appendices). For $\Pi = \text{KA2}$, an algorithm to 
compute 
$\text{KA2}_{p_f, p_b, n}^{security}(z)$ is given in~\cite{KimA-2011-ieeetwc}. 
Unfortunately, analytical expressions for $\text{KA2}_{p_f, p_b, 
n}^{availability}(z)$, $\text{Our}_{p_f, p_b, n}^{security}(z)$, and 
$\text{Our}_{p_f, 
p_b, 
n}^{availability}(z)$, seem to be cumbersome to find. Therefore, we address 
this issue by means of simulation.

A simulation means that, given a protocol $\Pi$, all the parties (Verifier, 
Prover, and Adversary) are simulated. The protocol is then executed $10^6$ 
times and the mean of the results (either security or availability) is taken as 
the estimation.

For the experiments, we consider $48$ rounds and a false rejection ratio lower 
than $5\%$. Note that, there does not exist a real consensus on how many rounds 
should be executed in a distance bounding protocol. For instance, 
in~\cite{KimA-2011-ieeetwc} up to $40$ rounds are considered, while others 
might vary from $20$ to 
$80$. We normally choose $64$ 
rounds for our experiments~\cite{Trujillo-Rasua:2010:PDP:1926325.1926352}. 
However, since our optimization problem is solved by means of simulations whose
performance decrease with the number of rounds, we drop from $64$
to $48$ rounds.

Regarding noise we consider two cases: (a) both forward and backward channels 
introduce noise with the same probability ($p_f = p_b \leq 0.05$), and (b) the 
noise probabilities for the forward and backward channels are not necessarily 
equal and are related as follows ($p_f + p_b = 0.05$)~\footnote{We have 
performed experiments by considering several other correlations between $p_f$ 
and $p_b$. The results are not significantly different to those provided by 
these two cases, though.}. We do not consider an overall noise probability 
higher than $0.1$. 
Actually, a high noise probability makes useless all the distance bounding 
protocols proposed up-to-date.

Armed with these settings, Figure~\ref{fig:noise1} 
and Figure~\ref{fig:noise2} show the maximum resistance to mafia fraud for the 
three protocols considering the cases (a) and (b) respectively. 


Figure~\ref{fig:noise1} shows the mafia fraud resistance of the three protocols 
when $p_f = p_b$. As expected, the higher the noise the lower the provided 
security of the three protocols. In this scenario, however, our protocol is 
clearly the best even though KA2 achieves the highest resistance when no noise 
is considered ($p_f = p_b = 0$). 

A different scenario ($p_f + p_b = 0.05$) is shown by Figure~\ref{fig:noise2}. 
There, the security of HK improves with $p_f$, while KA2 and our protocol are 
clearly sensitive to the increase of $p_f$. This is an inherent problem of both 
protocols since a noise in the forward channel could cause a 
``desynchronization'' between the prover and the verifier. Nevertheless, thanks 
to the noise resilience mechanism proposed in Section~\ref{sec:noise}, our 
protocol deals with noise much better than KA2 and, in general, performs better 
than both HK and KA2.

\begin{figure*}
\centering
  \subfigure[$p_f = p_b$]{\hspace{-5pt}
       \includegraphics[width=0.6\textwidth, angle=270]{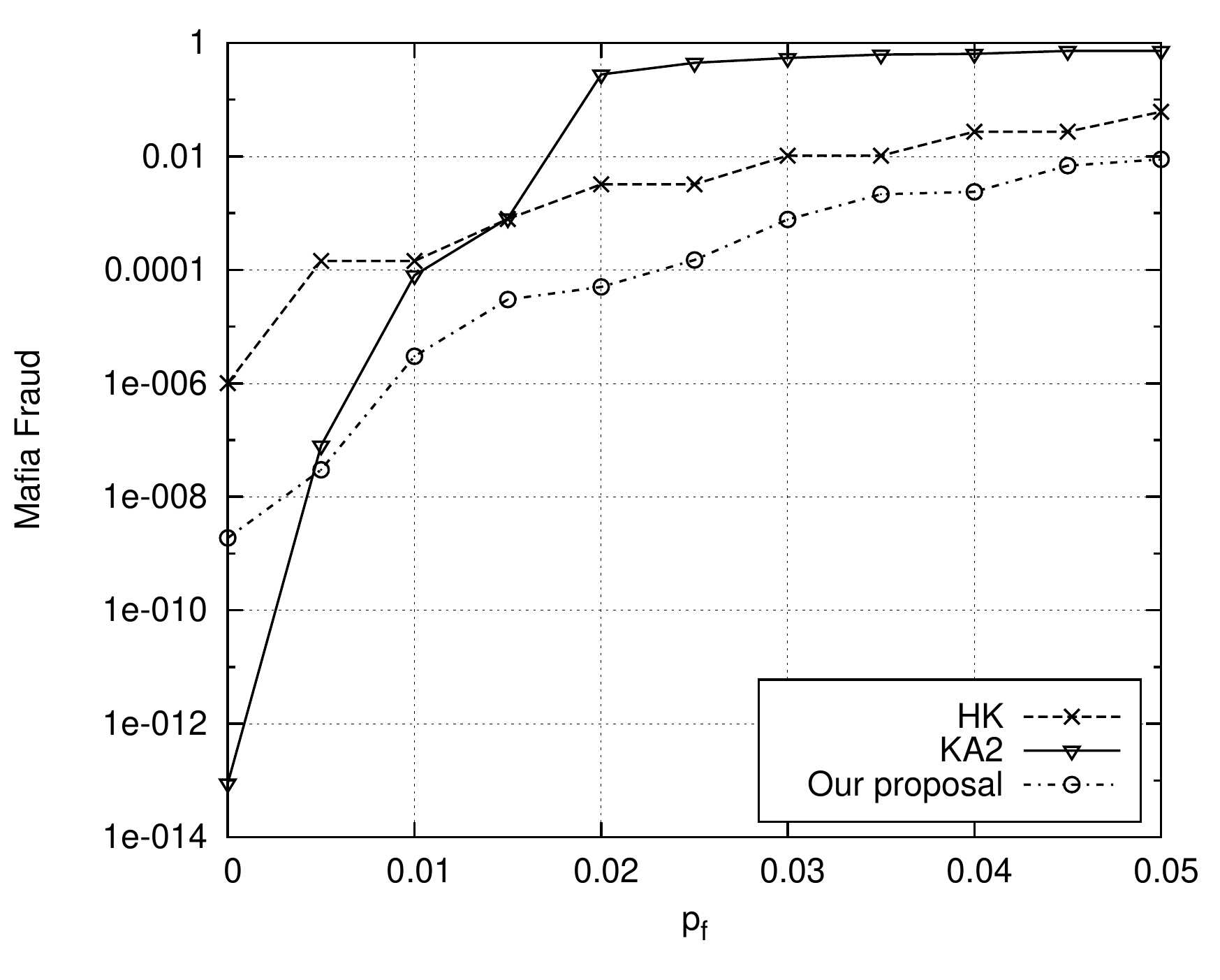}
  	\label{fig:noise1}
  }
  \subfigure[$p_f + p_b = 0.05$]{\hspace{-15pt}
       \includegraphics[width=0.6\textwidth, angle=270]{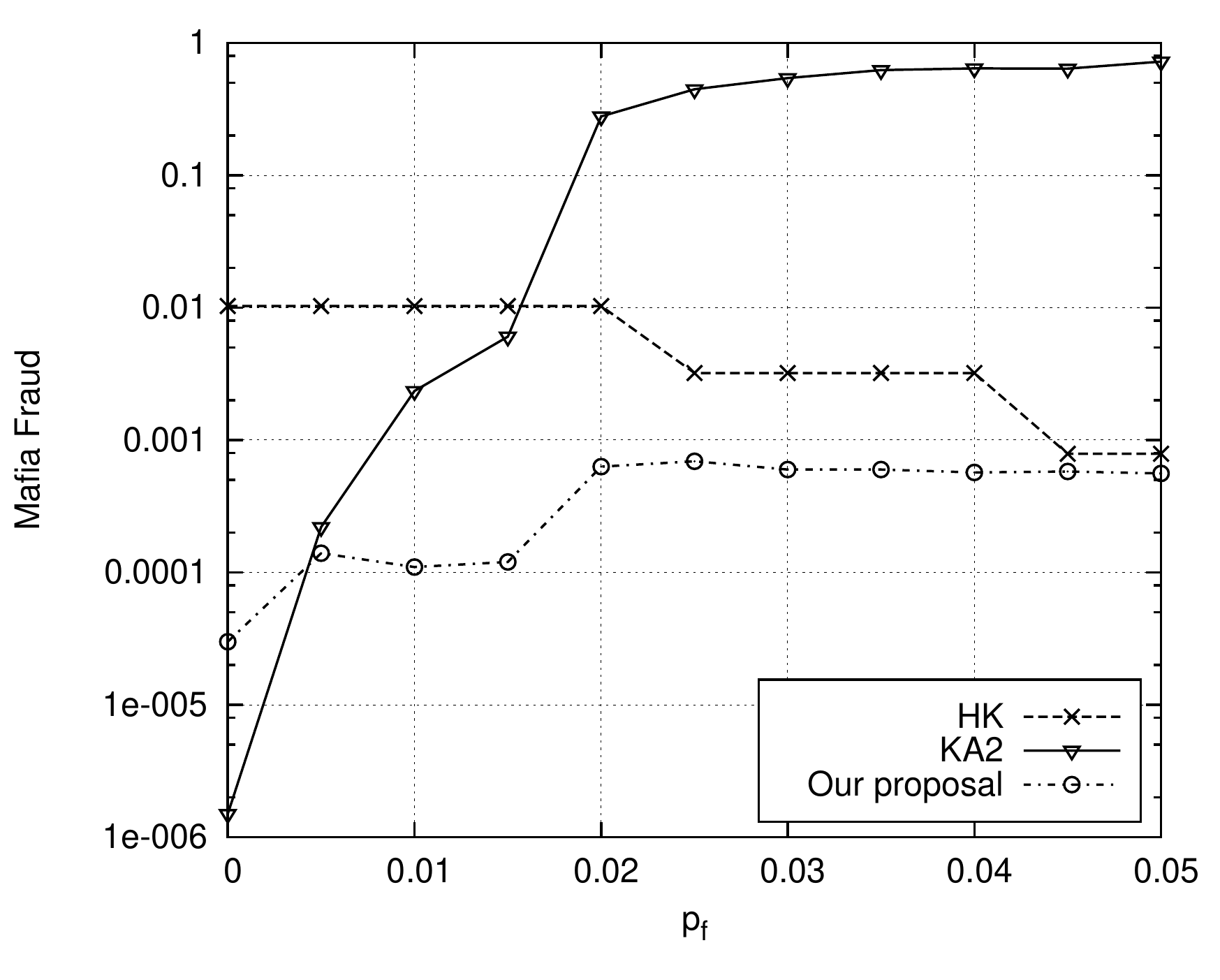}
  	\label{fig:noise2}
  }
  \caption{The maximum resistance to mafia fraud of HK, KA2, and our proposal, 
  considering $48$ rounds, a false rejection ratio lower than $5\%$, and 
  different values $p_f$ and $p_b$: in Figure~\ref{fig:noise1} $p_f = p_b \in 
  \{0, 0.005, ..., 0.045, 0.05\}$, and in Figure~\ref{fig:noise2} $p_f + p_b = 
  0.05$ where $p_f \in \{0, 0.005, ..., 0.045, 0.05\}$.}
\end{figure*}

\section{Conclusions}\label{sec:conclusion}

A new lightweight distance-bounding protocol has been introduced in this article. The protocol simultaneously deals with both mafia and distance frauds, without sacrificing memory or requiring additional computation. The analytical expressions and experimental results show that the new protocol outperforms the previous ones. This benefit is obtained through the use of dependent rounds in the fast phase. The protocol also goes a step further by dealing with the inherent background noise on the communication channels. This is a serious advantage compared to the other existing protocols.

\section*{\appendixname}

\begin{theorem}\label{last_theo1}
Let $x$ be the maximum number of errors allowed by the verifier in the HK 
protocol. The false rejection ratio $HK_{p_f, p_b, n}^{availability}(x)$ can be 
computed as follows:
\begin{align*}
&HK_{p_f, p_b, n}^{availability}(x) = \sum_{i = 
n-x}^{n}\left(\binom{n}{i}\right.\\
&\ \ \times\left.\left(1 - \frac{p_f}{2} - p_b + p_fp_b \right)^{i}
\left(\frac{p_f}{2} + p_b - p_fp_b\right)^{n-i}\right).
\end{align*}

\end{theorem}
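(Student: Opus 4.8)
The plan is to reduce the statement to a sum over independent Bernoulli round outcomes, the only real work being the computation of the per-round success probability. First I would determine the probability $p_c$ that a single round executed with a legitimate prover is counted correct by the verifier. Recalling the HK design of Section~\ref{section:review}, the prover answers round $i$ with the register bit $R_i^{\tilde c_i}$ selected by the \emph{received} challenge $\tilde c_i$, while the verifier checks this against $R_i^{c_i}$, the bit indexed by the challenge it actually sent; crucially, $R_i^0$ and $R_i^1$ are (pseudo)random independent bits. I would then enumerate the four cases of the per-round noise model of Section~\ref{section:noise}: (i) neither channel flips, probability $(1-p_f)(1-p_b)$, always correct; (ii) only the backward channel flips, probability $(1-p_f)p_b$, always wrong; (iii) only the forward channel flips, probability $p_f(1-p_b)$; and (iv) both flip, probability $p_f p_b$. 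In cases (iii) and (iv) the prover selects the wrong register bit $R_i^{c_i\oplus 1}$, which by independence equals the expected $R_i^{c_i}$ with probability exactly $\tfrac12$, and a possible subsequent backward flip of an already-uniform bit leaves this unchanged.

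Collecting the cases gives
\[
p_c = (1-p_f)(1-p_b) + \frac{p_f}{2} = 1 - \frac{p_f}{2} - p_b + p_f p_b,
\]
and therefore the per-round error probability
\[
p_e = 1 - p_c = \frac{p_f}{2} + p_b - p_f p_b,
\]
which are exactly the two base factors in the claimed expression.

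Next I would invoke the standing assumption that both channels are memoryless, so that the $n$ round outcomes are mutually independent and identically distributed; the number of correct rounds is then $\mathrm{Binomial}(n,p_c)$. Since the verifier tolerates at most $x$ errors, a legitimate prover passes precisely when at least $n-x$ rounds are correct. Summing the binomial mass over $i\in\{n-x,\dots,n\}$ correct rounds yields
\[
\sum_{i=n-x}^{n}\binom{n}{i}\, p_c^{\,i}\, p_e^{\,n-i},
\]
which is the stated formula once $p_c$ and $p_e$ are substituted.

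I expect the single delicate point to be the analysis of cases (iii) and (iv): one must argue carefully that a forward flip turns the answer into a bit that is uniform and independent of what the verifier expects, so that the outcome is correct with probability exactly $\tfrac12$ irrespective of any further backward flip. Once $p_c$ is pinned down, the remainder is the routine binomial-tail summation enabled by the memorylessness hypothesis, requiring no further argument.
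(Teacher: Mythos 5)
Your proposal is correct and follows essentially the same route as the paper's proof: the paper defines the event $W$ that a single answer is accepted, computes $\Pr(W) = \frac{1}{2}\bigl(p_f(1-p_b) + p_f p_b\bigr) + (1-p_f)(1-p_b) = 1 - \frac{p_f}{2} - p_b + p_f p_b$ via exactly your noise case analysis (forward noise makes the selected register bit uniform, hence correct with probability $\frac{1}{2}$ regardless of a backward flip), and then sums the binomial mass over $i \in \{n-x,\dots,n\}$ using the memoryless-channel assumption. Your version merely spells out the four cases that the paper compresses into two, so no substantive difference exists.
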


\begin{proof}

Let $W$ be the event that a legitimate prover's bit-answer is correct for the 
verifier. The false rejection ratio  $HK_{p_f, p_b, n}^{availability}(x)$ can 
be expressed in terms of $W$ as follows:
\begin{align}\label{theo_1_eq1}
&HK_{p_f, p_b, n}^{availability}(x) =\notag\\
& \eqspace\sum_{i = n-x}^{n}\binom{n}{i}\Pr(W)^{i}(1-\Pr(W))^{n-i}.
\end{align}

It should be noted that if a noise occurs on the forward channel then $\Pr(W) = 
\frac{1}{2}$, which happens with probability $p_f(1-p_b) + p_fp_b$. Thus:

\begin{align}\label{theo_1_eq2}
\Pr(W) & = \frac{1}{2}(p_f(1-p_b) + p_fp_b) + (1-p_f)(1-p_b) \notag\\
& = 1 - \frac{p_f}{2} - p_b + p_fp_b.
\end{align}

Equations~\ref{theo_1_eq1} and~\ref{theo_1_eq2} yield the expected result.

\end{proof}

\bibliographystyle{abbrv} %


\end{document}